\newtheorem{lem}{Lemma}
\newtheorem{thm}{Theorem}
\newtheorem{defn}{Definition}
\newcommand{\rev}[1]{{\color{blue}#1}} 
\newcommand{\com}[1]{\textbf{\color{red} (COMMENT: #1) }} 
\newcommand{\comg}[1]{\textbf{\color{green} (COMMENT: #1)}}
\newcommand{\response}[1]{\textbf{\color{green} (RESPONSE: #1)}} 
\newcommand{\rev}[1]{#1}
\newcommand{\com}[1]{}
\newcommand{\comg}[1]{}
\newcommand{\response}[1]{}
\let\@copyrightspace\relax
\begin{document}

\title{Amazon in the White Space: Social Recommendation Aided Distributed Spectrum Access}

\author{Xu Chen, \emph{Member, IEEE},  Xiaowen Gong, \emph{Student Member, IEEE}, Lei Yang, \emph{Member, IEEE}, and Junshan Zhang, \emph{Fellow, IEEE}
}

\maketitle
\pagestyle{empty}
\thispagestyle{empty}

\begin{abstract}
Distributed spectrum access (DSA) is challenging since an individual secondary user often has  limited sensing capabilities only. One key insight is that channel recommendation among secondary users can help to take advantage of the inherent correlation structure of spectrum availability in both time and space, and enable users to obtain more informed spectrum opportunities. With this insight, we advocate to leverage the wisdom of crowds, and devise  social recommendation aided DSA mechanisms to orient secondary users to make more intelligent spectrum access decisions, for both strong and weak network information cases. We start with the strong network information case where secondary users have the statistical information.  To mitigate the difficulty due to the curse of dimensionality in  the stochastic game approach, we take the one-step Nash approach and cast the social recommendation aided DSA decision making problem at each time slot as a strategic game. We show that it is a potential game, and then  devise an algorithm to achieve the Nash equilibrium by exploiting its finite improvement property. For the weak information case where secondary users do not have the statistical information, we develop a distributed reinforcement learning mechanism for social recommendation aided DSA based on the local observations of secondary users only. Appealing to the maximum-norm contraction mapping, we also derive the conditions under which the distributed mechanism converges and characterize the equilibrium therein. Numerical results  reveal that the proposed social recommendation aided DSA mechanisms can achieve superior performance using real social data traces and its performance loss in the weak network information case is insignificant, compared with the strong network information case.
\end{abstract}

\begin{IEEEkeywords}
Distributed spectrum access, Social Channel Recommendation, Game Theory
\end{IEEEkeywords}

\allowdisplaybreaks

\section{Introduction}\label{introduction}
Distributed spectrum access (DSA) is a promising
technique to alleviate the problem of spectrum under-utilization. It is envisaged that future generation wireless devices would be able to sense the
communication environment, through built-in spectrum sniffers as well as other   devices that are
wirelessly-connected in the vicinity, and cognitive radio users (secondary
users) can leverage DSA to opportunistically access the licensed
channels owned by legacy spectrum holders (primary
users).

\begin{figure}
\centering
\includegraphics[scale=1.0]{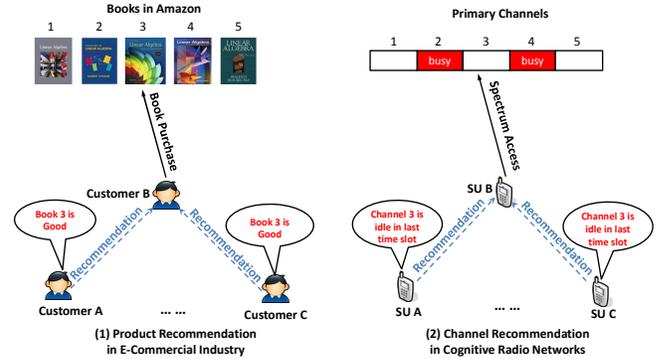}
\caption{\label{fig:rec}A comparison of recommendation systems in e-commercial industry and cognitive radio networks. Similar to the product recommendation, for the channel recommendation we can treat secondary users (SUs) as customers and the channels as products. For example, SUs A and C who have just successfully accessed channel $3$ recommend this channel to their common friend, SU B. Accordingly, SU B can achieve more informed decision making for spectrum access. We will outline the major differences in Section \ref{introduction}. }
\end{figure}

Efficient spectrum access decision making among secondary users is challenging for
several reasons: (1) spectrum opportunities for secondary users often change fast in frequency, time, and space, due to stochastic traffic and random activity patterns of primary users; (2) individual secondary users often have limited observations of the entire network environments due to the limited
spectrum sensing capability.

In this paper, we explore social intelligence of secondary users for DSA decision makings, through cooperative social interactions. One main motivation for considering social intelligence is to overcome the challenges due to incomplete network
information and limited capability of individual secondary users. Intuitively, usage-based channel recommendation offers efficient proactive sensing, by leveraging the wisdom of the crowd in the vicinity. Specifically, we will study social recommendation aided DSA mechanisms, where a secondary user provides
the recommendation (review) of the channel it just accessed to nearby secondary users who have social trust with it (e.g., friends). As a result, a secondary user can collect the channel recommendations from nearby social friends. When a secondary user needs to transmit data, it can check the recommendations from
other secondary users with social trust and choose a channel to access accordingly\footnote{Clearly, the performance of social recommendation based DSA hinges
heavily on the number of trusted users in the vicinity, and we will elaborate
further on this.}.

This study is inspired by the recommendation
system in the e-commerce industry  but goes beyond it by addressing unique challenges arised in DSA.  For example, in Amazon.com existing owners of various products can
provide recommendations (reviews), so
that other potential customers can pick the products
that best suit their needs. For cognitive radio networks, spectrum usage exhibits correlation in both the temporal and spatial domains that can be exploited for designing efficient
channel recommendation system: (1) \emph{temporal correlation}: spectrum availabilities are correlated in time, indicating that
very recent observations can be useful in the near future \cite{wellens2009empirical}; (2) \emph{spatial correlation}: secondary users
close to one another may experience similar spectrum availabilities \cite{wellens2009spatial}.
As illustrated in Figure \ref{fig:rec}, we can treat secondary users as ``customers" and the
channels as ``goods", and use recommendations to guide secondary users to find the channels they need. Such a channel recommendation system enables secondary users to take advantages of the correlations in both time and space and hence make more informed decisions.

The underlying rationale for channel recommendation is built upon the insight that the hand-held devices
are typically carried by human beings and the intrinsic social relationships among users can be exploited to promote effective and trustworthy cooperation. Indeed, with the explosive
growth of online social networks such as Facebook and
Twitter, more and more people are actively involved in
online social interactions, and social connections among
people are being extensively broadened. This has opened
up a new avenue to integrate the social interactions for
cooperative networking design.

We caution that different from  recommendation in e-commercial systems, to embrace the pronounced benefits of social recommendation for dynamic spectrum access, one needs to take into account the following unique characteristics of cognitive radio networks: (1) \emph{time-varying channel occupancies}: since primary users' activities are dynamic, the accessibility of each channel may change with time, and this is in sharp contrast to the product quality in e-commercial systems; (2) \emph{user-specific  channel recommendations}: different from many e-commercial systems such as Amazon in which the recommendations (reviews) are  accessible online by any customer, in cognitive radio networks, due to the diverse social relationships among secondary users, different users may receive different channel recommendations in a time slot; (3) \emph{interference over recommended channels}: in e-commercial systems, different customers
can purchase the same commodity by adopting the same recommendation, without any confliction. In cognitive radio networks, however, a recommended channel may be accessed by multiple secondary users, thus causing interference among them. We hence need to balance between exploitation of recommended channels and exploration of un-recommended channels.

We tackle the above technical challenges by considering two different scenarios where secondary users may have strong and weak information, respectively. For the strong information case where secondary users have the statistical information, we model social recommendation aided DSA problem at each time slot as a strategic game and derive the Nash equilibrium solution. For the weak information case where secondary users do not have the statistical information, we develop a distributed reinforcement learning scheme such that secondary users adjust their spectrum access policy based on their local observations only.

Intuitively, when a secondary user has a larger number of trusted users in the proximity for channel recommendation (e.g., a user would often stay in some regular locations such as work place and home), then the user can achieve better performance. However, when the channel recommendation information is limited, each secondary user will make the DSA decision based on either the prior channel statistical information (for the strong network
information case) or its learned spectrum access experience in the unrecommended states (for
the weak network information case). This ensures that the proposed social recommendation aided DSA mechanisms can also achieve good performance, even when the number of trusted users in proximity is relatively small.

\subsection{Summary of Main Contributions}

The main contributions of this paper are as follows:
\begin{itemize}
\item \emph{Social Recommendation Aided DSA Framework}: We propose a social recommendation aided DSA framework, which exploits the temporal and spatial correlations for spectrum utilization and leverages social trust among secondary users for cooperative channel recommendation, to overcome the challenges due to limited observations of network environment by individual secondary users. This framework highlights the interplay between the physical coupling among secondary users for DSA and the exploitation of social tie structure among users to stimulate effective and trustworthy cooperation for channel recommendation.

\item \emph{DSA with Strong Network Information}: We study the social recommendation aided DSA mechanism design with strong network information. To mitigate the difficulty due to the curse of dimensionality in  the stochastic game  approach \cite{bernstein2002complexity}, we take the one-step Nash approach (i.e., a myopic approach) \cite{goode2010pursuit} and model the social recommendation aided DSA decision making problem at each time slot as a strategic game. We show that the game is a potential game by constructing a potential function based on the physical and social coupling structure of social recommendation aided DSA. By exploiting the finite improvement property of the potential game, we further devise an algorithm that can achieve the Nash equilibrium.

\item \emph{DSA with Weak Network Information}: We investigate the social recommendation aided DSA mechanism design with weak network information. We develop a distributed reinforcement learning mechanism based on the local observations of secondary users only. By resorting to the tool of maximum-norm contraction mapping, we derive the conditions under which the learning mechanism converges to the solution of the fixed point equation. We further characterize the equilibrium and show that the learning mechanism achieves the maximum expected throughput approximately, given users' observed channel recommendation states.
\end{itemize}

\subsection{Related Work}\label{relatedwork}
The social aspect is now becoming a new and important dimension for
communication system design \cite{kayastha2011applications}. As the development
of online and mobile social networks such as Facebook and Twitter, more and more
real-world data and traces of human social interactions are being generated. This enables researchers
and engineers to observe, analyze, and incorporate the social
factors into engineering system design in a way never previously
possible \cite{kayastha2011applications}. Gao \emph{et al.} in \cite{gao2009multicasting} exploited social structures such as social
community to design efficient data forwarding and routing algorithms in delay tolerant networks. Hui \emph{et al.} in \cite{hui2011bubble} used the social betweenness and centrality as the forwarding metric. Costa \emph{et al.} in \cite{costa2008socially} proposed predictions based on metrics of social interaction to identify the best information carriers for content publish-subscribe. Han \emph{et al.} in \cite{han2012mobile} utilized the social influence phenomenon to devise efficient data dissemination mechanisms for mobile networks. Two key social phenomena of social trust and social reciprocity are exploited for cooperative D2D communication mechanism design in \cite{chen2013social}. In this paper, we explore social intelligence of secondary users for DSA via channel recommendation.

Recent works \cite{li2010customer,ACR2013} study the channel recommendation schemes, by assuming that the network information is complete and all the secondary users are cooperative, i.e., the interest of all users is aligned and  the underlying social graph for channel recommendation is complete (i.e., all the users are willing to help each other). In this paper, we eliminate these restrictive assumptions and consider that secondary users have strong/weak but incomplete network information. More importantly, we leverage the endogenous incentive which comes from the intrinsic social relationships among secondary users to promote effective and trustworthy cooperation for channel recommendation. We hence consider a general social graph among secondary users for channel recommendation. For the extreme case of the complete social graph \cite{li2010customer,ACR2013}, since a secondary user can receive the channel recommendations from any other users, all users would have the same observations of the channel recommendation states in each time slot. While for the case of a  general social graph, due to diverse social relationships among secondary users for channel recommendations,  users would have heterogenous observations of the channel recommendation states in each time slot, making the spectrum access mechanism design much more challenging.


\section{System Model}\label{system}

In this section we first introduce the system model of social recommendation
aided DSA, which can be projected onto two domains: the
physical domain and the social domain. In the physical domain, different
secondary users have different physical coupling due to their heterogeneous
interference relationships. In the social domain, different secondary
users have different channel recommendation relationships due to the
diverse social tie structure among secondary users. We next discuss both
physical and social domains in detail.
%

\subsection{Physical Graph Model\label{sub:Physical-Graph-Model}}

We consider a dynamic spectrum access network with a set $\mathcal{M}=\{1,2,...,M\}$
of heterogeneous primary channels. A set $\mathcal{N}=\{1,2,...,N\}$ of
secondary users try to access these channels, in a distributed manner,
when the channels are not occupied by primary (licensed) transmissions.

To capture the physical coupling, we construct the interference graph
$\mathcal{G}^{p}=\{\mathcal{N},\mathcal{E}^{p}\}$ based on the interference
relationships among secondary users. Here the vertex set $\mathcal{N}$
is the secondary user set, and the edge set $\mathcal{E}^{p}=\{(j,k):d_{j,k}\leq\delta,\forall j,k\neq j\in\mathcal{N}\}$
is the set of interference edges with $d_{j,k}$ being the distance
between secondary users $j$ and $k$ and $\delta$ denoting the interference
range. If there is an interference edge between two secondary users,
then they cannot successfully transmit their data on the same idle
channel simultaneously due to collision. In the sequel, we denote
the set of interfering users with user $n$ as $\mathcal{N}_{n}^{p}=\{j:(n,j)\in\mathcal{E}^{p},\forall j\in\mathcal{N}\}$.

We consider a time-slotted system model as follows:

(1) \emph{Channel state}: For each primary channel $m$, the channel state
at a time slot $t$ is given as $S_{m}(t)=-1$ if channel $m$ is busy, and $S_{m}(t)=1$ if channel $m$ is idle.

(2) \emph{Channel state transition}: Similar to many studies
(e.g., \cite{zhao2007decentralized} and the references therein), the state change of a channel $m$ is modeled as a two-state Markovian process\footnote{The proposed mechanisms can be extended to other channel state models (e.g., high-order Markovian channel model \cite{chen2010prediction}) by following the ideas developed in this paper.}.
We denote the channel state probability distribution at time $t$
as $\boldsymbol{c}_{m}(t)\triangleq(Pr\{S_{m}(t)=-1\},Pr\{S_{m}(t)=1\})$. It follows that  $\boldsymbol{c}_{m}(t)=\boldsymbol{c}_{m}(t-1)\Gamma_{m},\forall t\geq1$,
with the transition matrix $\Gamma_{m}=\left[\begin{array}{cc}
1-\lambda_{m} & \lambda_{m}\\
\mu_{m} & 1-\mu_{m}
\end{array}\right],$ where $0\le\lambda_{m}\le 1$ and $0\le\mu_{m}\le 1$ are the state transition probabilities. Furthermore, the average channel idle probability $\gamma_{m}$ of
a channel $m$ can be obtained from the stationary distribution of
the Markov chain as $\gamma_{m}=\frac{\lambda_{m}}{\lambda_{m}+\mu_{m}}.$

(3) \emph{Channel throughput}: For a secondary user $n$,
its realized channel throughput $b_{m}^{n}(t)$ on an idle channel
$m$ evolves according to an i.i.d. random
process with a finite mean $B_{m}^{n}$. Here the channel throughput
can be user-specific, which is useful to model the scenario that users
may adopt heterogeneous coding/modulation transmission schemes and
experience different channel conditions due to their local environmental
effects such as fading. For example, in a Rayleigh fading channel
environment, the channel gain in each time slot is a random variable
that follows the exponential distribution.

(4) \emph{Channel contention}: To resolve the transmission collision when
multiple interfering secondary users access the same idle channel,
we adopt the random access scheme for medium access control. Specifically,
each secondary user $n$ will contend for the idle channel with a
probability of $p_{n}$ and a collision occurs if multiple interfering
users contend for the same channel at the same time. In this case,
the probability that secondary user $n$ can successfully grab the
idle channel for data transmission is given as $q_{n}(\boldsymbol{a})=p_{n}\prod_{k\in\mathcal{N}_{n}^{m}(\boldsymbol{a})}(1-p_{k}),$ where $\boldsymbol{a}=(a_{1},...,a_{N})$ denotes the channel selection
profile of the secondary users and $\mathcal{N}_{n}^{m}(\boldsymbol{a})\triangleq\{k:a_{k}=a_{n}\mbox{ and }k\in\mathcal{N}_{n}^{p}\}$
is the set of interfering users that choose the same channel $m$
as user $n$.

\subsection{Social Graph Model}\label{socialgraph}

In order to carry out channel recommendation, we assume that
there exists a common control channel for the information exchange
among the secondary users%
\footnote{Please refer to \cite{cormio2010common} for the details on how to set up and maintain
a reliable common control channel in cognitive radio networks.%
}. When the common control channel approach is not feasible, we can
adopt the device-to-device (D2D) communication approach, such that
secondary users equipped with the radio interfaces such as blue-tooth/WiFi-direct
can communicate with each other directly for the channel recommendation. Alternatively, users can publish their recommendations in a cloud-based platform that other users have access to.

Since information exchange would incur overhead such as energy consumption, to achieve trustworthy collaboration among secondary users for channel
recommendation, we leverage the social trust among users. The underlying
rationale of using social trust is that hand-held devices are carried
by human beings and the knowledge of human social ties (e.g., kinship,
friendship, or colleague relationship) can be utilized to achieve
effective and trustworthy collaboration for channel recommendation. Moreover,  building upon the social trust among secondary users, we can prevent the potential attacks of releasing false channel recommendations by untrusted/unacquainted users and enhance the security level of DSA.

Specifically, we introduce the social graph $\mathcal{G}^{s}=\{\mathcal{N},\mathcal{E}^{s}\}$
to model social recommendation relationships due to the social trust
among secondary users. Here the vertex set is the same as the user
set $\mathcal{N}$ and the edge set is given as $\mathcal{E}^{s}=\{(n,m):e_{nm}^{s}=1,\forall n,m\in\mathcal{N}\}$
where $e_{nm}^{s}=1$ if and only if users $n$ and $m$ have social
trust between each other and can also exchange information with each
other. In practice we can adopt the private matching protocol in \cite{von2008veneta} such that two secondary users can locally identify the social relationships among them in a privacy-preserving manner. Note that the size of data packet for encapsulating recommendation information is very small. For example, for a 8-bit packet, we can use the first 7 bits to denote the channel ID (in this case we have up to 128 channel IDs available) and the last bit to indicate the idle/busy state of the channel. Thus, the recommendation information exchange can be carried out quickly during a slot.

\subsection{Social Recommendation Aided DSA}

Based on the physical and social graph models above, we devise
the social recommendation aided DSA mechanism.
A key idea is that secondary users provide the channel recommendations (reviews) of
the channels they have just accessed to their social neighbors for spectrum
access decision marking. Specifically, each secondary user $n\in\mathcal{N}$
executes the following five stages during each time
slot $t$:

(1) \emph{Spectrum sensing}: Sense the channel based on channel selection
decision made at the end of the previous time slot. The purpose of
spectrum sense is to detect the presence of primary users, in order
to avoid causing harmful interference to primary users.

(2) \emph{Channel contention}: If the channel sensing result is idle,
compete for the channel with the random access mechanism described
in Section \ref{sub:Physical-Graph-Model}.

(3) \emph{Data transmission}: Transmit data packets if user $n$ successfully
grabs the channel.

(4) \emph{Channel recommendation}: User $n$ informs its social neighbors
$m\in\mathcal{N}_{n}^{s}$ about the channel recommendation (review),
which contains the channel ID and the channel quality (idle or busy)
that it has just selected to access. At the same time, user $n$ collects
the channel recommendations from its social neighbors. Based on the
recommendations, user $n$ then has an observation of the channel
availability of all channels as $\boldsymbol{I}_{n}(t)=(I_{1}^{n}(t),...,I_{M}^{n}(t))$
at current time slot $t$, where
\begin{eqnarray*}
 &  & I_{m}^{n}(t)=\\
 &  & \begin{cases}
1, & \mbox{if \ensuremath{\{k\in\mathcal{N}_{n}^{s}}: \ensuremath{a_{k}(t)=m} and \ensuremath{S_{m}(t)=1\}\neq\emptyset}\ensuremath{,}}\\
-1, & \mbox{\mbox{if \ensuremath{\{k\in\mathcal{N}_{n}^{s}}: \ensuremath{a_{k}(t)=m} and \ensuremath{S_{m}(t)=-1\}\text{\ensuremath{\neq}}\emptyset},}}\\
0, & \mbox{if \ensuremath{\ensuremath{\{k\in\mathcal{N}_{n}^{s}}:\ensuremath{a_{k}(t)=m}\}=\emptyset}.}
\end{cases}
\end{eqnarray*}
That is, we have $I_{m}^{n}(t)=1$ if channel $m$ is recommended
as idle by some social neighbors, $I_{m}^{n}(t)=-1$ if channel $m$
is recommended as busy by some social neighbors, and $I_{m}^{n}(t)=0$
if no recommendations about channel $m$ are available. We call $I_{m}^{n}(t)$
as the channel recommendation state in the sequel.

(5) \emph{Channel selection}: Based on the channel recommendation states
$\boldsymbol{I}_{n}(t)$, user $n$ will make the decision
of choosing a channel $a_{n}(t+1)$ to access for the next time slot
$t+1$ according to the algorithms in Sections \ref{sec:Social-Recommendation-Based}
and \ref{sec:Social-Recommendation-Based-1}.

For the ease of exposition, in this paper we use the OR rule as study case to fuse recommendations from social neighbors, i.e., the recommendation state of a channel $I_{m}^{n}(t)=1$ ($I_{m}^{n}(t)=0$, respectively) as long as the channel is  recommended as idle (busy, respectively) by at lease one social neighbor. This is reasonable due to the social trust among the users such that they are willing to recommend channels cooperatively and truthfully. Our model can also apply when other rules are adopted for recommendation information fusion. For example, to further enhance the robustness, we can use the majority voting rule such that the recommendation state of a channel $I_{m}^{n}(t)=1$ ($I_{m}^{n}(t)=0$, respectively) if the number of idle (busy, respectively) recommendations by the social neighbors is major (e.g., greater than $50\%$). Furthermore, in this case we focus on designing efficient spectrum access strategy given the fused recommendation information and hence do not consider the collusion effect in information fusion (e.g., a group of users collude to broadcast biased recommendation information). How to design a collusion-resistant recommendation information fusion scheme is very challenging and will be addressed in a future work.

We shall emphasize that the algorithm design in Sections \ref{sec:Social-Recommendation-Based} and \ref{sec:Social-Recommendation-Based-1} does not simply let the users  choose the recommended channels blindly. A key step here is
how to achieve efficient distributed decision makings among the secondary users based on the channel recommendation states $(\boldsymbol{I}_{n}(t),\forall n\in\mathcal{N})$. This is challenging because: (1) Although spectrum availabilities are correlated in time, the channel states might change later due to the dynamic activities of primary users;  (2) due
to the diverse social relationships among secondary users, different users may have different number of social friends nearby and hence have different channel recommendation states $\boldsymbol{I}_{n}(t)$. Therefore, secondary users may have heterogeneous views of the network environment; (3)
subject to the physical coupling among secondary users, if multiple
interfering users choose to access the same recommended idle channel, they may
cause severe collision among each other, leading to a poor system performance. We hence need to balance between exploitation of recommended channels and exploration of un-recommended channels.
In the following Sections  \ref{sec:Social-Recommendation-Based} and \ref{sec:Social-Recommendation-Based-1}, we will address these challenges by
considering the cases that the secondary users have strong and weak
network information, respectively. In Section \ref{Discusssion}, we will discuss the security and privacy issues of social recommendation aided DSA. Similar to many studies in dynamic spectrum access (e.g., \cite{huang2011decentralized,zhang2013data,chang2007optimal}), to enable tractable analysis and get useful insights, in the paper we consider the static user case in which the physical graph $\mathcal{G}^{p}$ and the social graph $\mathcal{G}^{s}$ are fixed. This is also motivated by that users' locations would change at the time scale of seconds/minutes and the proposed algorithms are implemented at a much smaller time scale of microseconds. The dynamic case that both users' locations and the physical/social graph change very fast is very challenging and will be considered in a future work.

\section{Social Recommendation Aided DSA: The Strong Information Case}\label{sec:Social-Recommendation-Based}

In this section, we first consider the social recommendation aided
DSA mechanism design with strong network
information. For this case, a secondary user $n$ has strong knowledge
of the network environment, including the channel state transition
matrix $(\Gamma_{m},\forall m\in\mathcal{M})$, the mean channel throughput
$(B_{m}^{n},\forall m\in\mathcal{M})$, and the channel contention
probabilities $(p_{k},\forall k\in\mathcal{N}_{n}^{p})$ of its interfering
users. Note that in the strong network information case, a secondary user
does not have the complete network information, since a user
may not know other secondary users' mean channel throughputs and channel recommendation states. The motivation of considering the strong network information case is two-fold: (1) the strong network information case  can
be relevant to the scenario that there exists a supporting infrastructure
(e.g., secondary access point) to collect and analyze the statistical
network information; (2) the strong information case can serve as the benchmark for the weak information case in Section \ref{sec:Social-Recommendation-Based-1}.

At a time slot $t$, according to the channel state transition model in Section \ref{sub:Physical-Graph-Model},
a secondary user $n$ can compute the
channel idle probability $\omega_{m}^{n}(t+1)$ for the next time slot $t+1$, based on a channel $m$'s recommendation state $I_{m}^{n}(t)$ as
\[\omega_{m}^{n}(t+1)=\theta_{m}^{n}(I_{m}^{n}(t)),\]
where $\theta_{m}^{n}(\cdot)$ is the mapping function defined as follows:
\[
\theta_{m}^{n}(I_{m}^{n}(t))\triangleq\begin{cases}
1-\mu_{m}, & \mbox{if \ensuremath{I_{m}^{n}(t)=1,}}\\
\lambda_{m}, & \mbox{if \ensuremath{I_{m}^{n}(t)=-1,}}\\
\gamma_{m}, & \mbox{if \ensuremath{I_{m}^{n}(t)=0.}}
\end{cases}
\]
Based on the channel idle probabilities $(\omega_{m}^{n}(t),\forall m\in\mathcal{M})$ of all channels,
a secondary user $n$ will then make spectrum access decision for the next
time slot $t+1$. Let $a_{n}(t+1)$ denote the channel selection decision
of a secondary user $n$ and $\boldsymbol{a}(t+1)=(a_{1}(t+1),...,a_{N}(t+1))$
be the channel selection decisions of all secondary users. We
can then obtain the expected throughput of secondary user $n$ at
the next time slot as
\begin{align}
    U_{n}(\boldsymbol{a}(t+1))= & \omega_{a_{n}(t+1)}^{n}(t+1)B_{a_{n}(t+1)}^{n}\nonumber \\
  & \times p_{n}\prod_{k\in\mathcal{N}_{n}^{a_{n}(t+1)}(\boldsymbol{a}(t+1))}(1-p_{k}).\label{eq:U1}
\end{align}
Here $\mathcal{N}_{n}^{a_{n}(t+1)}(\boldsymbol{a}(t+1))\triangleq\{k:a_{k}(t+1)=a_{n}(t+1)\mbox{ and }k\in\mathcal{N}_{n}^{p}\}$
is the set of interfering secondary users that select the same channel
as secondary user $n$. Note that for the ease of exposition, similar to many previous studies in spectrum sensing (e.g., \cite{maskery2009decentralized,ahmad2009optimality,hossain2009dynamic} and the references therein), we assume that secondary users would experience the same spectrum availability on the same channel during a time slot. This can correspond to the practical case that the primary users (e.g., TV broadcast towers) have a much larger wireless transmission coverage area than the secondary users. Nevertheless, the proposed model in this paper can be extended for the case that secondary users may experience the heterogenous spectrum availability, by modifying the channel idle probability computing function $\omega_{m}^{n}(t+1)$ that maps a user's received channel recommendation states to the idle probability of a channel.  For example, when a user's received recommendation states diversify, we can adopt the approach in \cite{xue2014cooperative} that exploits the correlations among users' spectrum observations to compute the channel idle probability.

\subsection{Problem Formulation}

We next consider the distributed decision making problem among the
secondary users for maximizing their throughputs. One possible approach is to formulate the DSA decision making problem as a stochastic game, such that the game state at each time slot $t$ is defined as the channel recommendation states $\{\boldsymbol{I}_{n}(t),n\in\mathcal{N}\}$ of all the users. However, due to the curse of dimensionality \cite{bernstein2002complexity} and the strong coupling among the users in both social and physical dimensions, such a stochastic game is computationally intractable and extremely difficult to analyze. To enable tractable analysis, in this paper we take the one-step Nash approach \cite{goode2010pursuit} (i.e., myopic policy), such that in each time slot $t$ we compute the Nash equilibrium for the stage game based on the current game state only. For the problem under consideration, the one-step Nash solution can be an efficient approximate solution for the stochastic game approach, since secondary users would like to diversify their channel selections at the Nash equilibrium (to mitigate congestion effect in DSA) and this would also lead to efficient channel recommendations (i.e., due to the gain of getting more recommendation information by exploring more channels) for improving the future DSA decision makings. Numerical results in Section \ref{Sim} corroborate that the one-step Nash approach can achieve superior performance.

Specifically, at a time slot $t$, let $a_{-n}(t+1)$ be the set of channel selections chosen by all
other users except user $n$ for the next time slot. Given the other users' spectrum access
decisions $a_{-n}(t+1)$, user $n$ would like to choose a channel $a_{n}\in\mathcal{M}$
to maximize its expected throughput at the next time slot, i.e.,
\[
\max_{a_{n}(t+1)\in\mathcal{M}}U_{n}\Bigl(a_{n}(t+1),a_{-n}(t+1)\Bigr),\forall n\in\mathcal{N}.
\]
The distributed nature of the problem above naturally leads to a game theoretic formulation
such that each secondary user aims to maximize its expected throughput for the next time slot.
We thus formulate the DSA decision making problem among the secondary users at time slot $t$ as a
strategic game $\Omega=(\mathcal{N},\mathcal{M},\{U_{n}\}_{n\in\mathcal{N}})$,
where the set of secondary users $\mathcal{N}$ is the set of players,
the channel set $\mathcal{M}$ is the set of strategies for each  user,
and the throughput function $U_{n}$ of each user $n$ is the payoff
function of player $n$. In the sequel, we call the game $\Omega$
as the social recommendation aided DSA game at the time slot $t$. Note that at different time slot $t$, the channel recommendation states $\{\boldsymbol{I}_{n}(t),n\in\mathcal{N}\}$ can be different and hence the underlying specific social recommendation aided DSA game can be different. Our analysis in the following section holds for a general social recommendation aided DSA game.

\subsection{Structural Property}

We next explore the property of the social recommendation aided DSA game at a time slot $t$. We first study the existence of Nash equilibrium, which
is defined as follows.
\begin{defn}
A spectrum access profile $\boldsymbol{a}^{*}=(a_{1}^{*},...,a_{N}^{*})$
is a Nash equilibrium of the social recommendation aided DSA game $\Omega$ if no secondary user can improve its expected throughput
by unilaterally changing its channel selection, i.e., $a_{n}^{*}=\max_{a_{n}\in\mathcal{M}}U_{n}(a_{n},a_{-n}^{*}),\forall n\in\mathcal{N}.$
\end{defn}

To show the the existence of Nash equilibrium, we resort to a useful
tool of potential game.
\begin{defn}
A game is called a potential game if it admits a potential function
$\Phi(\boldsymbol{a})$ such that for every $n\in\mathcal{N}$ and
$a_{-n}\in\mathcal{M}^{N-1}$, given any $a_{n},a_{n}^{'}\in\mathcal{M}$, we have
\begin{eqnarray}
 &  & \mbox{sgn}\left(U_{n}(a_{n}^{'},a_{-n})-U_{n}(a_{n},a_{-n})\right)\nonumber \\
 & = & \mbox{sgn}\left(\Phi(a_{n}^{'},a_{-n})-\Phi(a_{n},a_{-n})\right),\label{eq:potential}
\end{eqnarray}
where $\mbox{sgn}(\cdot)$ is the sign function.
\end{defn}

\begin{defn}
The event where a player $n$ changes to an action $a_{n}^{'}$ from
the action $a_{n}$ is a better response update if and only if $U_{n}(a_{n}^{'},a_{-n})>U_{n}(a_{n},a_{-n})$.
\end{defn}

\rev{An appealing property of the potential game is that it admits the \emph{finite improvement property}, such that any asynchronous better response
update (no more than one player updates the strategy at any given time) must be finite and leads to a Nash equilibrium \cite{PG1996}. This is because that due to (\ref{eq:potential}), when a user unilaterally carries out a better response update to improve its utility function, implicitly this will also lead to an increase in the potential function. Since the potential function is upper-bounded (i.e., can not keep increasing infinitely) and the strategy space is finite, the asynchronous better response update process must stop within some finite steps and reaches a Nash equilibrium wherein no further improvement by any user can be carried out.}

%
%
We can show that the social recommendation aided DSA
game $\Omega$ is indeed a potential game with a potential function given as
\begin{align}
& \Phi(\boldsymbol{a}(t+1))=  \sum_{n=1}^{N}-\ln(1-p_{n})\times \nonumber \\
& \left(\frac{\sum_{i\in\mathcal{N}_{n}^{p}(\boldsymbol{a}(t+1))}\ln(1-p_{i})}{2}+\ln\left(\omega_{a_{n}(t+1)}^{n}(t+1)B_{a_{n}(t+1)}^{n}p_{n}\right)\right).\label{eq:PPPP}
\end{align}
\begin{thm}
\label{thm:The-social-recommendation}The social recommendation aided
DSA game $\Omega$ is a potential game with the potential
function $\Phi(\boldsymbol{a}(t+1))$ in (\ref{eq:PPPP}), and hence possesses the finite improvement property and always
has a Nash equilibrium.
\end{thm}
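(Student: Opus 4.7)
The plan is to establish the stronger \emph{exact} potential property: every unilateral deviation changes $\Phi$ by a strictly positive multiple of the change in $\ln U_n$. This automatically implies the sign-matching condition (\ref{eq:potential}), so $\Omega$ is a potential game, and then the existence of a pure-strategy Nash equilibrium and the finite improvement property follow immediately from the classical Monderer--Shapley theorem applied to the finite strategy space $\mathcal{M}^N$.

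First I would take logarithms. Since $U_n(\boldsymbol{a})>0$ whenever $\omega_{a_n}^n,B_{a_n}^n,p_n$ are positive, (\ref{eq:potential}) is equivalent to the corresponding sign condition on $\ln U_n$, and from (\ref{eq:U1}),
\begin{equation*}
\ln U_n(\boldsymbol{a}) \;=\; \ln\bigl(\omega_{a_n}^n B_{a_n}^n p_n\bigr) + \sum_{k\in\mathcal{N}_n^{a_n}(\boldsymbol{a})}\ln(1-p_k).
\end{equation*}
When player $n$ deviates from $a_n$ to $a_n'$, the change in $\ln U_n$ decomposes into an ``own'' part $\ln(\omega_{a_n'}^nB_{a_n'}^n)-\ln(\omega_{a_n}^nB_{a_n}^n)$ and a ``coupling'' part $\sum_{k\in\mathcal{N}_n^{a_n'}(a_n',a_{-n})}\ln(1-p_k)-\sum_{k\in\mathcal{N}_n^{a_n}(a_n,a_{-n})}\ln(1-p_k)$ involving only interfering neighbors.

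Next I would decompose $\Phi$ in (\ref{eq:PPPP}) along the same two pieces. The ``own'' piece $\sum_{i}-\ln(1-p_i)\ln(\omega_{a_i}^iB_{a_i}^ip_i)$ is separable across players, so only its $n$-th term is affected by $n$'s deviation and it contributes exactly $-\ln(1-p_n)$ times the own part of $\Delta\ln U_n$. For the coupling piece
\begin{equation*}
\tfrac{1}{2}\sum_{i=1}^{N}-\ln(1-p_i)\sum_{k\in\mathcal{N}_i^{a_i}(\boldsymbol{a})}\ln(1-p_k),
\end{equation*}
the key manipulation is to rewrite the double sum as a sum over \emph{unordered} interference edges $\{i,k\}\in\mathcal{E}^p$ with $a_i=a_k$; because the interference graph is symmetric, each such pair appears twice, which is exactly what the prefactor $\tfrac{1}{2}$ in (\ref{eq:PPPP}) is there to cancel.

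The main (really the only) obstacle is the careful bookkeeping in the coupling term: one has to verify that $n$'s deviation affects only edges incident to $n$, and that after the edge rewriting the induced change equals $-\ln(1-p_n)$ times the coupling part of $\Delta\ln U_n$. Summing the own and coupling contributions then yields
\begin{equation*}
\Phi(a_n',a_{-n})-\Phi(a_n,a_{-n}) \;=\; -\ln(1-p_n)\,\bigl[\ln U_n(a_n',a_{-n})-\ln U_n(a_n,a_{-n})\bigr].
\end{equation*}
Since $p_n\in(0,1)$ we have $-\ln(1-p_n)>0$, so the two sides of (\ref{eq:potential}) share a sign and $\Phi$ is a potential function. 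The finite improvement property and the existence of a pure-strategy Nash equilibrium then follow from the standard result on finite potential games, completing the argument.
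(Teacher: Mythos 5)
Your proposal is correct and follows essentially the same route as the paper: decompose $\Phi$ into a player-separable (vertex-weight) part and a pairwise-coupling (edge-weight) part, use the $\tfrac{1}{2}$ factor to pass to unordered interference edges on the same channel, and show that a unilateral deviation changes $\Phi$ by exactly $-\ln(1-p_n)$ times the change in $\ln U_n$, which yields the sign condition in (\ref{eq:potential}) and hence the finite improvement property and Nash equilibrium existence. This is precisely the weighted log-potential / interaction-graph argument the paper gives, so no further comparison is needed.
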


The proof is relegated to the appendix. \rev{Intuitively, we can interpret the potential function above from a graphical point of view. Specifically, given a strategy profile of all secondary users $\boldsymbol{a}(t+1)$, we can construct a ``virtual" interaction graph such that each user $n$ is a vertex in the graph with the vertex weight of $-\ln(1-p_n)\ln\left(\omega_{a_{n}(t+1)}^{n}(t+1)B_{a_{n}(t+1)}^{n}p_{n}\right)$, and there exists an undirected edge between two interfering users $n$ and $m$ that access the same spectrum (i.e., $a_n(t+1)=a_m(t+1)$) and the edge weight is $-\ln(1-p_n)\ln(1-p_m)$. Thus, the potential function in (\ref{eq:PPPP}) represents the total weight of the interaction graph such that $\sum_{n=1}^{N}-\ln(1-p_{n})\ln\left(\omega_{a_{n}(t+1)}^{n}(t+1)B_{a_{n}(t+1)}^{n}p_{n}\right)$ is the total weight of all the vertices and $\frac{\sum_{n=1}^{N}\sum_{i\in\mathcal{N}_{n}^{p}(\boldsymbol{a}(t+1))}-\ln(1-p_{n})\ln(1-p_{i})}{2}$ is the total weight of all the edges. When a user $n$ unilaterally changes its channel from $a_{n}(t+1)$ to $a_{n}^{'}(t+1)$, the change of the total weight of the interaction graph consists of two parts: the first part is due to the change in the vertex weight of user $n$, i.e., \begin{align*}
-\ln(1-p_n)&  \left( \ln\left(\omega_{a_{n}(t+1)}^{n}(t+1)B_{a_{n}(t+1)}^{n}p_{n}\right)\right. \\
          &  \left.  -\ln\left(\omega_{a_{n}^{'}(t+1)}^{n}(t+1)B_{a_{n}^{'}(t+1)}^{n}p_{n}\right)\right),
\end{align*}
and the second part is due to the change in the edge weight by adding new edges between user $n$ and the new interfering users $\mathcal{N}_{n}^{p}(\boldsymbol{a}^{'}(t+1))$ and removing old edges between user $n$ and the previous interfering users $\mathcal{N}_{n}^{p}(\boldsymbol{a}(t+1))$, i.e.,\begin{align*}
-\ln(1-p_n)&  \left( \sum_{i\in\mathcal{N}_{n}^{p}(\boldsymbol{a}^{'}(t+1))}\ln(1-p_{i})-\sum_{i\in\mathcal{N}_{n}^{p}(\boldsymbol{a}(t+1))}\ln(1-p_{i})\right).
\end{align*}
Thus, the change of the total weight of the interaction equals to $-\ln(1-p_n) \left(\ln (U_{n}(a_{n}^{'},a_{-n}))-\ln (U_{n}(a_{n},a_{-n}))\right)$, and hence the condition in (\ref{eq:potential}) is satisfied. That is, when a user $n$ unilaterally changes its channel from $a_{n}(t+1)$ to $a_{n}^{'}(t+1)$ to improve its utility function, this will also lead to an increase in the total weight of the interaction graph (i.e., the potential function). }


The result in Theorem \ref{thm:The-social-recommendation} implies
that any asynchronous better response update process is guaranteed
to reach a Nash equilibrium within a finite number of iterations.
This motivates the algorithm design for computing the Nash equilibrium for each time slot $t$
in the following section.

\subsection{Computing Nash Equilibrium}

We next propose a scheme to compute a Nash equilibrium in Algorithm
\ref{alg:Online-Distributed-Channel}, which is carried out at
the channel selection stage of each time slot $t$.
The key idea of algorithm design is to let secondary users asynchronously
improve their channel selections according to the finite improvement
property of the social recommendation aided DSA game
$\Omega$.

\begin{algorithm}[tt]
\begin{algorithmic}[1]
\State \textbf{initialization:}
\State \hspace{0.4cm} \textbf{set} initial channel selection profile $\boldsymbol{a}^{0}(t+1)=\boldsymbol{a}(t)$.
\State \hspace{0.4cm} \textbf{set} the iteration index $l=0$.
\State \textbf{end initialization\newline}

\While{$\boldsymbol{a}^{l}(t+1)$ is not a Nash equilibrium}
        \State  Secondary user $n=1+$$(l$ mod $N$) \textbf{choose} the best channel $a_{n}^{l+1}(t+1)$ that maximizes the expected throughput according to (\ref{eq:bs1}).
        \State \textbf{set} the index $l=l+1$.
\EndWhile

\end{algorithmic}
\caption{\label{alg:Online-Distributed-Channel} Nash equilibrium computation algorithm for social recommendation aided DSA
with strong network information}
\end{algorithm}

Further, we assume that when a secondary user enters the system,
the supporting infrastructure (e.g., secondary access point) will
assign it with a unique ID, say index as $1,2,3,...$\footnote{When the supporting infrastructure is not available, we can achieve asynchronous the best response update by allowing that secondary users adopt the random backoff mechanism such as CSMA to content for the channel selection update opportunities. This can also converge to a Nash equilibrium, due to the finite improvement property of the social recommendation aided DSA game.}. For the initial iteration $l=0$,
we set the initial channel selection decision profile for the next time slot $t+1$ as the channel
selections adopted by the secondary users at the current time slot $t$, i.e., $\boldsymbol{a}^{l}(t+1)=\boldsymbol{a}(t)$, for $l=0$.
Then for each iteration $l=0,1,2,...$, a secondary user $n=1+$$(l$ mod $N$) is
scheduled by the infrastructure to carry out the best response update
in turn (according to the assigned IDs). Specifically, the scheduled
secondary user $n$ will select a channel $a_{n}^{l}(t+1)$ that
maximizes its expected throughput for the next time slot $t+1$ as
\begin{align}
a_{n}^{l}(t+1) =  \arg\max_{a\in\mathcal{M}}  &  U_{n} \Bigl( a,a^{l}_{-n}(t+1)\Bigr),\label{eq:bs1}
\end{align}
where $a^{l}_{-n}(t+1)$ denote the current channel selection decisions of all the other users except user $n$ at the current iteration $l$. According to (\ref{eq:U1}), to carry
out the best response update in (\ref{eq:bs1}),  a secondary user $n$ needs to observe the current channel selection decisions
of its interfering users $k\in\mathcal{N}_{n}^{p}$, besides the knowledge of its own mean channel throughput, the channel state transition probabilities, and its own channel recommendation states. This can be achieved
such that each secondary user $k$ broadcasts the channel selection $a_{k}^{l+1}(t+1)$ when its decision has been updated. Such a best
response update procedure continues until that a Nash equilibrium
is achieved, i.e., no secondary users can carry out any channel selection
update (i.e., $\boldsymbol{a}^{l+1}(t+1)=\boldsymbol{a}^{l}(t+1)$).
According to the finite improvement property of the social recommendation
based DSA game $\Omega$, such an asynchronous best response
update process is guaranteed to achieve a Nash equilibrium within a finite number of iterations. Note that during each iteration a user will choose the best channel among $M$ channels, and hence the computational complexity of Algorithm \ref{alg:Online-Distributed-Channel} at one iteration is $\mathcal{O}(M)$. Now suppose it takes $L$ iterations for the algorithm to converge. Then the total computational complexity of Algorithm \ref{alg:Online-Distributed-Channel} is $\mathcal{O}(LM)$. Numerical results show that the algorithm can converge quickly, with the number of iterations $L<2N$. In this case, the computational complexity of Algorithm \ref{alg:Online-Distributed-Channel} is $\mathcal{O}(NM)$, which increases linearly as the user size increases. \rev{Note that it would incur extra time overhead to obtain the social recommendation information for computing the Nash equilibrium. Nevertheless, in practice we do not need to update such information at every iteration slot since the length of a scheduling mini-slot for each iteration is at the time scale of microseconds \cite{jiang2010scheduling}, while the recommendation states (due to primary users' activities) change at the much larger time scale of milliseconds/seconds \cite{hossain2007cognitive}. In this case, for instance, we can update the social recommendation information only at the first iteration slot of the Nash equilibrium computation process, which would help to significantly reduce the time overhead. }


\section{Social Recommendation Aided DSA: The Weak Information Case}\label{sec:Social-Recommendation-Based-1}

In this section we consider the social recommendation aided DSA mechanism design with the weak network information, by removing the assumptions in the strong information
case that a secondary user $n$ has the information of the channel
state transition matrix $(\Gamma_{m},\forall m\in\mathcal{M})$, the
mean channel throughput $(B_{m}^{n},\forall m\in\mathcal{M})$, and
the channel contention probabilities $(p_{k},\forall k\in\mathcal{N}_{n}^{p})$
of its interfering users. In the weak information case, a secondary
user $n$ only observes its channel recommendation states $\boldsymbol{I}_{n}(t)$
at each time slot $t$.

\subsection{DSA Using Distributed Reinforcement Learning}

For the weak network information case, we propose a distributed
reinforcement learning algorithm for social recommendation aided DSA, such that each secondary user learns to adjust its spectrum access strategy adaptively based on its local observations. A
key idea here is to extend the principle of single-agent reinforcement
learning \cite{bertsekas1995neuro} to a multi-agent setting. Such multi-agent reinforcement
learning algorithm has also been applied to the game theoretic models in \cite{cominetti2010payoff}. Here we generalize  the learning algorithm to the social recommendation aided DSA problem wherein a user's payoff depends on the interference graph structure and the environment is time varying (i.e., the channel recommendation states are varying in different time slots). This
leads to significant differences in analysis from those in \cite{bertsekas1995neuro,cominetti2010payoff}. For example,
we show that the convergence condition for the learning
algorithm depends on the structure of interference graph,
which is different from those results in \cite{bertsekas1995neuro,cominetti2010payoff}.

Specifically, a secondary user $n$ makes the spectrum access decision
based on the perception values $\boldsymbol{V}_{n}(t)=(V_{m,i}^{n},\forall m\in\mathcal{M},i\in\mathcal{I}\triangleq\{1,-1,0\})$,
where $V_{m,i}^{n}(t)$ represents the user $n$'s current perception
of the expected throughput of choosing channel $m$ when the channel
recommendation state $I_{m}^{n}(t)$ of channel $m$ is $i$. At the
channel selection stage of each time slot $t$, a
secondary user $n$ updates its perceptions based on the received
throughput $U_{n}(t)$ of the chosen channel $a_{n}(t)$ as
\begin{equation}
V_{m,i}^{n}(t)=\begin{cases}
(1-\alpha_{t})V_{m,i}^{n}(t-1)+\alpha_{t}U_{n}(t), & \mbox{if \ensuremath{a_{n}(t)=m}}\\
 & \mbox{and }I_{m}^{n}(t)=i,\\
V_{m,i}^{n}(t-1), & \mbox{otherwise,}
\end{cases}\label{eq:VV}
\end{equation}
where $\alpha_{t}$ are the smoothing factors satisfying that $\sum_{t}\alpha_{t}=\infty$ and $\sum_{t}\alpha_{t}^{2}<\infty$. Roughly speaking, (\ref{eq:VV}) implies  that a secondary user only changes the perception
of accessing the channel $m$ in recommendation state $I_{m}^{n}(t)$,
and keeps the perceptions of the channel $m$ in other recommendation
states $i\neq I_{m}^{n}(t)$ and other channels $m'\neq m$ in all recommendation
states unchanged.

Then, a secondary user $n$ makes the spectrum access decision and
chooses a channel $a_{n}(t+1)\in\mathcal{M}$ for the next time slot
$t+1$ according to the spectrum access strategy $\boldsymbol{\sigma}_{n}(t)=(\sigma_{1}^{n}(t),...,\sigma_{M}^{n}(t))$,
where $\sigma_{m}^{n}(t)$ is the probability of choosing channel
$m$. The strategy $\boldsymbol{\sigma}_{n}(t)$ is generated according
to the secondary user $n$'s current channel recommendation states $\boldsymbol{I}_{n}(t)$ and the
perception values $\boldsymbol{V}_{n}(t)$. Similar to the single-agent
learning, we choose the Boltzmann distribution \cite{bertsekas1995neuro} as the mapping from
perceptions to the spectrum access strategies, i.e.,
\begin{equation}
\sigma_{m}^{n}(t)=\frac{\exp\left(\beta V_{m,I_{m}^{n}(t)}^{n}(t)\right)}{\sum_{m'\in\mathcal{M}}\exp\left(\beta V_{m',I_{m'}^{n}(t)}^{n}(t)\right)},\forall m\in\mathcal{M},\label{eq:sss}
\end{equation}
where the parameter $\beta$ controls the degree of channel sampling (i.e., channel exploration). When $\beta$ is very small, (e.g., $\beta\rightarrow0$), each secondary  user tends to choose to access channels purely randomly (in this case we have maximum degree of channel exploration). When $\beta$ is large (e.g., $\beta\rightarrow\infty$), user $n$ will tend to exploit the channel with the current best perception value among all channels (in this case we have minimum degree of channel exploration). As shown in \cite{tekin2011online, tekin2012online},  the channel sampling plays a very important on the restless bandit based spectrum access problem. In general, without appropriate sampling the restless bandit algorithm may converge to sub-optimal solution, due to the sampling bias. Similarly, for our proposed learning mechanism, we will show later that a moderately small $\beta$ (which increases the randomness of the spectrum access strategy) is required to ensure sufficient channel sampling over the states so that user's perception values can be constantly updated and guarantee that the distributed learning converges to an efficient equilibrium with small performance loss.

We summarize the
distributed reinforcement learning algorithm for social recommendation
aided DSA in Algorithm \ref{alg:Distributed-reinforcement-learni}. Note that in Algorithm \ref{alg:Distributed-reinforcement-learni}, each user performs perception update and channel selection among $M$ channels in parallel. Hence Algorithm \ref{alg:Distributed-reinforcement-learni} has a very low computation complexity of $\mathcal{O}(M)$ in a time slot.
%

\begin{algorithm}[tt]
\begin{algorithmic}[1]
\State \textbf{initialization:}
\State \hspace{0.4cm} \textbf{set} the initial perception values $V_{m,i}^{n}(0)=1$.
\State \textbf{end initialization \newline}
\Loop{ for the channel selection stage of each time slot $t$ and each secondary user $n\in\mathcal{N}$ in parallel:}
\State \textbf{update} the perception values $V_{m,i}^{n}(t)$ according to (\ref{eq:VV}).
\State \textbf{select} a channel $a_{n}(t+1)$ for the next time slot $t+1$ according to (\ref{eq:sss}).
\EndLoop
\end{algorithmic}
\caption{\label{alg:Distributed-reinforcement-learni}Distributed reinforcement
learning algorithm for social recommendation aided DSA
with weak network information}
\end{algorithm}

\subsection{Convergence of DSA Using Distributed Reinforcement Learning}

We now study the convergence of the proposed distributed reinforcement
learning algorithm. First, the perception value update in (\ref{eq:VV}) can
be written in the following equivalent form ($\forall n\in\mathcal{N},m\in\mathcal{M},i\in\mathcal{I}$),
\begin{equation}
V_{m,i}^{n}(t)-V_{m,i}^{n}(t-1)=\alpha_{t}\left(Z_{m,i}^{n}(t)-V_{m,i}^{n}(t-1)\right),\label{eq:zzz}
\end{equation}
where $Z_{m,i}^{n}(t)$ is the update value defined as
\[
Z_{m,i}^{n}(t)=\begin{cases}
U_{n}(t), & \mbox{if \ensuremath{a_{n}(t)=m}}\mbox{ and }I_{m}^{n}(t)=i,\\
V_{m,i}^{n}(t-1), & \mbox{otherwise.}
\end{cases}
\]

For the sake of brevity, we denote the perception values, update values,
spectrum access strategies, and channel recommendation states of all
the secondary users as $\boldsymbol{V}(t)\triangleq(V_{m,i}^{n}(t),\forall n\in\mathcal{N},m\in\mathcal{M},i\in\mathcal{I})$,
$\boldsymbol{Z}(t)\triangleq(Z_{m,i}^{n}(t),\forall n\in\mathcal{N},m\in\mathcal{M},i\in\mathcal{I})$,
$\boldsymbol{\sigma}(t)\triangleq(\sigma_{m}^{n}(t),\forall n\in\mathcal{N},m\in\mathcal{M}),$
and $\boldsymbol{I}(t)\triangleq(I_{m}^{n}(t),\forall n\in\mathcal{N},m\in\mathcal{M}),$
respectively.  Moreover, we define the mapping from the perceptions $\boldsymbol{V}(t)$ to
the conditional expected throughput of secondary user $n$ choosing channel
$m$, given the recommendation state $I_{m}^{n}(t)=i$, as \[
R_{m,i}^{n}(\boldsymbol{V}(t))\triangleq E\Bigl[U_{n}(t)|\boldsymbol{V}(t),I_{m}^{n}(t)=i,a_{n}(t)=m\Bigr],\]
where $E[\cdot]$ is taken with respect to the spectrum
access strategies $\boldsymbol{\sigma}(t)$ of all users (i.e., the
perceptions $\boldsymbol{V}(t)$ of all users due to (\ref{eq:sss})).
We have the following result.
\begin{lem}
\label{lem:For-the-distributed}For the distributed reinforcement
learning algorithm, if the parameter $\beta$ satisfies  the condition
\begin{equation}
\beta<\frac{1}{2\max_{n\in\mathcal{N},m\in\mathcal{M}}\{B_{m}^{n}\}\max_{n\in\mathcal{N}}|\mathcal{N}_{n}^{p}|},\label{eq:llll}
\end{equation}
the mapping from the perceptions to the expected throughput $\boldsymbol{R}(\boldsymbol{V}(t))\triangleq(R_{m,i}^{n}(\boldsymbol{V}(t)),\forall n\in\mathcal{N},m\in\mathcal{M},i\in\mathcal{I})$
forms a maximum-norm contraction mapping, i.e.,
\[
||\boldsymbol{R}(\boldsymbol{V})-\boldsymbol{R}(\hat{\boldsymbol{V}})||_{\infty}\leq\varepsilon||\boldsymbol{V}-\boldsymbol{\hat{V}}||_{\infty},
\]
 where $0<\varepsilon\triangleq2\beta\max_{n\in\mathcal{N},m\in\mathcal{M}}\{B_{m}^{n}\}\max_{n\in\mathcal{N}}|\mathcal{N}_{n}^{p}|<1$.
\end{lem}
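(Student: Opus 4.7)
The plan is to first rewrite $R_{m,i}^n(\boldsymbol{V})$ in closed form and then estimate its sensitivity to $\boldsymbol{V}$ componentwise. Conditioning on $a_n(t)=m$ and $I_m^n(t)=i$, the realized throughput $U_n(t)$ is nonzero only when channel $m$ is actually idle (probability $\theta_m^n(i)=\omega_m^n$ by construction of the $\theta$ map), user $n$ contends on it (probability $p_n$), and none of the interfering users that happen to pick channel $m$ contends. Since the other users draw their channels independently according to their own Boltzmann strategies $\sigma_m^k(\boldsymbol{V})$, this last event factorizes; after taking expectation over $b_m^n(t)$, I would obtain
\[
R_{m,i}^n(\boldsymbol{V})=\omega_m^n\,B_m^n\,p_n\prod_{k\in\mathcal{N}_n^p}\bigl(1-p_k\,\sigma_m^k(\boldsymbol{V})\bigr),
\]
which is the convenient starting point for the contraction estimate.

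Next I would bound the difference of two such products via the elementary inequality $|\prod_k x_k-\prod_k y_k|\le\sum_k|x_k-y_k|$ valid for $x_k,y_k\in[0,1]$, giving
\[
\bigl|R_{m,i}^n(\boldsymbol{V})-R_{m,i}^n(\hat{\boldsymbol{V}})\bigr|\le\omega_m^n B_m^n p_n\sum_{k\in\mathcal{N}_n^p}p_k\,\bigl|\sigma_m^k(\boldsymbol{V})-\sigma_m^k(\hat{\boldsymbol{V}})\bigr|.
\]
The key remaining step is a Lipschitz estimate for the Boltzmann map (\ref{eq:sss}) in the sup norm. Using the log-sum-exp inequality $|\log\sum_i e^{a_i}-\log\sum_i e^{b_i}|\le\max_i|a_i-b_i|$, both the numerator exponent $\beta V_{m,I_m^k(t)}^k$ and the log-normalizer $\log\sum_{m'}e^{\beta V_{m',I_{m'}^k(t)}^k}$ shift by at most $\beta\|\boldsymbol{V}-\hat{\boldsymbol{V}}\|_\infty$ under the perturbation, so $|\log\sigma_m^k(\boldsymbol{V})-\log\sigma_m^k(\hat{\boldsymbol{V}})|\le 2\beta\|\boldsymbol{V}-\hat{\boldsymbol{V}}\|_\infty$. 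Combining with the mean-value bound $|e^a-e^b|\le\max(e^a,e^b)|a-b|$ and $\sigma_m^k\le 1$ then yields the Lipschitz constant $|\sigma_m^k(\boldsymbol{V})-\sigma_m^k(\hat{\boldsymbol{V}})|\le 2\beta\|\boldsymbol{V}-\hat{\boldsymbol{V}}\|_\infty$.

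Assembling the pieces and using the crude bounds $\omega_m^n\le 1$, $p_n,p_k\le 1$, $B_m^n\le\max_{n,m}B_m^n$, and $|\mathcal{N}_n^p|\le\max_n|\mathcal{N}_n^p|$, the per-component estimate sharpens to $|R_{m,i}^n(\boldsymbol{V})-R_{m,i}^n(\hat{\boldsymbol{V}})|\le\varepsilon\|\boldsymbol{V}-\hat{\boldsymbol{V}}\|_\infty$ uniformly in $(n,m,i)$, with $\varepsilon=2\beta\max_{n,m}B_m^n\max_n|\mathcal{N}_n^p|$; hypothesis (\ref{eq:llll}) is exactly what forces $\varepsilon<1$, and taking the sup over $(n,m,i)$ on the left turns the pointwise bound into the claimed maximum-norm contraction. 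I expect the main obstacle to be the independence argument underlying the product form in the first step: one must justify that, conditional on $\boldsymbol{V}(t)$ and on $\{a_n(t)=m,\,I_m^n(t)=i\}$, the other users' channel selections are mutually independent and also independent of the idle/busy realization on channel $m$, so that the contention-winning probability genuinely collapses into $\prod_k(1-p_k\sigma_m^k)$. Once that is in hand, the softmax Lipschitz estimate is standard via the log-sum-exp trick and delivers precisely the constant $2\beta$ that makes (\ref{eq:llll}) the sharp threshold for contraction.
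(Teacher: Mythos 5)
Your proposal is correct and follows essentially the same route as the paper's argument: write $R_{m,i}^{n}(\boldsymbol{V})=\omega_{m}^{n}B_{m}^{n}p_{n}\prod_{k\in\mathcal{N}_{n}^{p}}\bigl(1-p_{k}\sigma_{m}^{k}(\boldsymbol{V})\bigr)$, bound the product difference by $\sum_{k}p_{k}|\sigma_{m}^{k}(\boldsymbol{V})-\sigma_{m}^{k}(\hat{\boldsymbol{V}})|$, and use the $2\beta$ sup-norm Lipschitz bound for the Boltzmann map, which yields exactly $\varepsilon=2\beta\max_{n,m}B_{m}^{n}\max_{n}|\mathcal{N}_{n}^{p}|$. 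The independence concern you flag is harmless: other users randomize independently given their own recommendation states, and since your per-realization bound is uniform in those states, averaging over them preserves the same contraction constant.
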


The proof is relegated to the appendix. Lemma \ref{lem:For-the-distributed} implies that when the interference
among secondary users becomes more severe (i.e., the maximum degree
$\max_{n\in\mathcal{N}}|\mathcal{N}_{n}^{p}|$ of the interference
graph becomes larger), a smaller $\beta$ is needed to guarantee the
convergence. This is because that interference relationship among
users becomes more complicated and users should put more weight to
explore the environment. Based on the property of contraction mapping \cite{granas2003fixed}, we know that the sequence $\{\boldsymbol{V}(t),\forall t\geq0\}$ will converge to a fixed point $\boldsymbol{V}^{*}$.
\begin{thm}
\label{thm:For-the-distributed}For the distributed reinforcement
learning algorithm, if the parameter $\beta$ satisfies (\ref{eq:llll}), then the sequence $\{\boldsymbol{V}(t),\forall t\geq0\}$
converges to the solution $\boldsymbol{V}^{*}$ of the fixed point equation, which satisfies
that
\begin{equation}
R_{m,i}^{n}(\boldsymbol{V}^{*})=V_{m,i}^{n*},\forall n\in\mathcal{N},m\in\mathcal{M},i\in\mathcal{I}.\label{eq:tttttt}
\end{equation}
\end{thm}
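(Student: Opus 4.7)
The plan is to combine Lemma~\ref{lem:For-the-distributed} with classical stochastic approximation theory for maximum-norm contractions. First, since $\boldsymbol{R}$ is a maximum-norm contraction with modulus $\varepsilon<1$ under condition (\ref{eq:llll}), the Banach fixed-point theorem yields immediately the existence and uniqueness of $\boldsymbol{V}^{*}$ satisfying (\ref{eq:tttttt}); what remains is to show that the iterates $\boldsymbol{V}(t)$ generated by (\ref{eq:zzz}) actually converge to this fixed point.

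The second step is to recast (\ref{eq:zzz}) in the standard Robbins--Monro form. Introducing the indicator $\chi_{m,i}^{n}(t)\triangleq\mathbb{1}\{a_{n}(t)=m,\,I_{m}^{n}(t)=i\}$, the update can be written as
\[
V_{m,i}^{n}(t)=V_{m,i}^{n}(t-1)+\alpha_{t}\chi_{m,i}^{n}(t)\bigl(U_{n}(t)-V_{m,i}^{n}(t-1)\bigr),
\]
which is an asynchronous stochastic approximation scheme with effective component-wise step sizes $\tilde{\alpha}_{t}^{n,m,i}=\alpha_{t}\chi_{m,i}^{n}(t)$. Conditioning on the natural filtration $\mathcal{F}_{t-1}$ generated by the history up to time $t-1$, the increment decomposes into a mean term whose $(n,m,i)$-component equals the visit probability times $R_{m,i}^{n}(\boldsymbol{V}(t-1))-V_{m,i}^{n}(t-1)$, plus a zero-mean, bounded martingale-difference noise term (bounded because throughputs are bounded by $\max_{n,m}B_{m}^{n}$ in expectation and the perception values stay in a bounded invariant set, which can be shown by a simple induction using the convex-combination form of (\ref{eq:VV})).

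The third step is to verify the step-size and persistence-of-excitation conditions. From the hypotheses $\sum_{t}\alpha_{t}=\infty$ and $\sum_{t}\alpha_{t}^{2}<\infty$, and because the Boltzmann rule (\ref{eq:sss}) with finite $\beta$ ensures $\sigma_{m}^{n}(t)>0$ uniformly bounded below for each channel, together with the assumption that each recommendation state $i\in\mathcal{I}$ is reached infinitely often under the primary-user Markov dynamics, one obtains $\sum_{t}\tilde{\alpha}_{t}^{n,m,i}=\infty$ and $\sum_{t}(\tilde{\alpha}_{t}^{n,m,i})^{2}<\infty$ almost surely for every $(n,m,i)$. Invoking the standard convergence theorem for asynchronous stochastic approximation driven by a maximum-norm contraction (e.g., Bertsekas and Tsitsiklis, \emph{Neuro-Dynamic Programming}, Prop.~4.4, or Tsitsiklis 1994, Thm.~3), the contraction factor $\varepsilon<1$ from Lemma~\ref{lem:For-the-distributed} together with these step-size conditions forces $\boldsymbol{V}(t)\to\boldsymbol{V}^{*}$ almost surely, completing the proof.

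The main obstacle is the third step: rigorously establishing persistent visitation of every triple $(n,m,i)$. The user-specific recommendation states $I_{m}^{n}(t)$ depend jointly on the Markovian channel states, on the social-graph neighborhood, and on the coupled strategies of all neighbors, so the usual single-agent ergodicity argument does not apply directly. One would need to argue that, because each $\sigma_{m'}^{k}(t)$ is uniformly bounded away from zero (a consequence of bounded perception values and finite $\beta$) and each underlying Markov chain $\Gamma_{m}$ is irreducible, the joint process induced by (\ref{eq:sss}) visits each recommendation state of each channel infinitely often a.s. Once that lemma is in hand, the rest of the proof follows the standard template without difficulty.
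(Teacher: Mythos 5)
Your proposal follows essentially the same route as the paper: Lemma~\ref{lem:For-the-distributed}'s maximum-norm contraction of $\boldsymbol{R}(\cdot)$ is combined with the step-size conditions $\sum_{t}\alpha_{t}=\infty$, $\sum_{t}\alpha_{t}^{2}<\infty$ and standard (asynchronous) stochastic-approximation convergence results for contraction mappings in the spirit of \cite{bertsekas1995neuro,cominetti2010payoff}, yielding existence and uniqueness of $\boldsymbol{V}^{*}$ by the fixed-point theorem and almost-sure convergence of $\boldsymbol{V}(t)$ to it. The persistent-visitation condition you flag as the main obstacle is treated no more rigorously in the paper, where it is implicitly ensured by the Boltzmann rule (\ref{eq:sss}) keeping every channel-selection probability strictly positive, so your attempt is essentially the paper's argument with somewhat more explicit attention to the asynchronous-update conditions.
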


The proof is given in the appendix. We next explore the property of the equilibrium $\boldsymbol{V}^{*}$
of the distributed reinforcement learning algorithm. Since $R_{m,i}^{n}(\boldsymbol{V}^{*})=E[U_{n}(t)|\boldsymbol{V}^{*},I_{m}^{n}(t)=i,a_{n}(t)=m]$,
from Theorem \ref{thm:For-the-distributed}, we see that a secondary
user $n$ can achieve an accurate estimation of the expected throughput
based on the perception value $V_{m,i}^{n*}$ at the equilibrium.
Moreover, for the spectrum access strategy $\boldsymbol{\sigma}^{*}$
generated according to (\ref{eq:sss}) based on perception values
$\boldsymbol{V}^{*}$, we can show the following result.
\begin{thm}
\label{thm:For-the-distributed2}For the distributed reinforcement learning algorithm, given a secondary user $n$'s channel
recommendation states $\boldsymbol{I}_{n}=(I_{m}^{n},\forall m\in\mathcal{M})$,
the spectrum access strategy $\boldsymbol{\sigma}_{n}^{*}$ at the equilibrium $\boldsymbol{V}^{*}$ maximizes its expected throughput approximately, i.e.,
\[
\sum_{m=1}^{M}\sigma_{m}^{n*}R_{m,I_{m}^{n}}^{n}(\boldsymbol{V}^{*})\geq\max_{\boldsymbol{\sigma}_{n}}\left\{ \sum_{m=1}^{M}\sigma_{m}^{n}R_{m,I_{m}^{n}}^{n}(\boldsymbol{V}^{*})\right\} -\psi,
\]
where the approximation gap $\psi$ is at most $\frac{1}{\beta}\ln M$.
\end{thm}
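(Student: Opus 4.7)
The plan is to exploit the fact that the Boltzmann (soft-max) strategy in (\ref{eq:sss}), together with the fixed point identity from Theorem \ref{thm:For-the-distributed}, makes the equilibrium strategy $\boldsymbol{\sigma}_n^*$ coincide with the maximizer of an entropy-regularized expected throughput. The approximation gap will then drop out of the standard $\log$-sum-exp bound $\max_m x_m \leq \frac{1}{\beta}\ln\sum_m \exp(\beta x_m) \leq \max_m x_m + \frac{1}{\beta}\ln M$.

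First, I would use Theorem \ref{thm:For-the-distributed} to rewrite the quantities of interest purely in terms of the equilibrium perception values: by (\ref{eq:tttttt}), $R_{m,I_m^n}^n(\boldsymbol{V}^*) = V_{m,I_m^n}^{n*}$, so I can set $x_m \triangleq V_{m,I_m^n}^{n*}$ for the user's observed recommendation state $\boldsymbol{I}_n$. The Boltzmann form in (\ref{eq:sss}) then becomes $\sigma_m^{n*} = \exp(\beta x_m)/\sum_{m'}\exp(\beta x_{m'})$. Writing $Z \triangleq \sum_{m'}\exp(\beta x_{m'})$, a direct computation gives
\begin{equation*}
\sum_{m=1}^{M}\sigma_m^{n*} x_m \;=\; \frac{1}{\beta}\ln Z \;-\; \frac{1}{\beta}H(\boldsymbol{\sigma}_n^*),
\end{equation*}
where $H(\boldsymbol{\sigma}_n^*) = -\sum_m \sigma_m^{n*}\ln \sigma_m^{n*}$ is the Shannon entropy, which one recognizes as the identity underlying the variational characterization of the Boltzmann distribution as the maximizer of $\sum_m \sigma_m x_m + \frac{1}{\beta}H(\boldsymbol{\sigma})$ over the probability simplex.

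Next, I would invoke the two elementary bounds $\frac{1}{\beta}\ln Z \geq \max_m x_m$ (keeping one term in the sum) and $H(\boldsymbol{\sigma}_n^*)\leq \ln M$ (uniform distribution maximizes entropy on an $M$-ary alphabet). Combining them yields $\sum_m \sigma_m^{n*} x_m \geq \max_m x_m - \frac{1}{\beta}\ln M$. Since any mixed strategy $\boldsymbol{\sigma}_n$ satisfies $\sum_m \sigma_m x_m \leq \max_m x_m$, the right-hand side of the inequality to be proved equals $\max_m x_m$, which is attained at a pure strategy. Substituting back $x_m = R_{m,I_m^n}^n(\boldsymbol{V}^*)$ delivers the claim with $\psi = \frac{1}{\beta}\ln M$.

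There is no real obstacle here beyond bookkeeping; the only point that deserves care is making sure that the conditioning in the definition of $R_{m,i}^n(\boldsymbol{V})$ is compatible with treating $\boldsymbol{I}_n$ as fixed, so that replacing $R_{m,I_m^n}^n(\boldsymbol{V}^*)$ by $V_{m,I_m^n}^{n*}$ is legitimate user-by-user. Once that is justified via (\ref{eq:tttttt}), the rest is the log-sum-exp/entropy argument and the whole proof is essentially a one-paragraph chain of inequalities.
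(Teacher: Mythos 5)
Your proposal is correct and follows essentially the same route as the paper: both rest on the fixed-point identity $R_{m,I_m^n}^n(\boldsymbol{V}^*)=V_{m,I_m^n}^{n*}$ from Theorem \ref{thm:For-the-distributed}, the recognition of $\boldsymbol{\sigma}_n^*$ as the maximizer of the entropy-regularized expected throughput, and the bound $-\sum_m\sigma_m^{n*}\ln\sigma_m^{n*}\leq\ln M$. The only cosmetic difference is that the paper derives the Boltzmann optimizer via the KKT conditions of the regularized problem in (\ref{eq:pppp}), whereas you obtain the same conclusion directly from the log-sum-exp identity; the resulting chain of inequalities and the gap $\frac{1}{\beta}\ln M$ are identical.
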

\begin{proof}
We first consider the following optimization problem:
\begin{eqnarray}
 & \max_{\boldsymbol{\sigma}_{n}} & \sum_{m=1}^{M}\sigma_{m}^{n}R_{m,I_{m}^{n}}^{n}(\boldsymbol{V}^{*})-\frac{1}{\beta}\sum_{m=1}^{M}\sigma_{m}^{n}\ln\sigma_{m}^{n},\label{eq:pppp}\\
 & \mbox{subject to} & \sum_{m=1}^{M}\sigma_{m}^{n}=1,\sigma_{m}^{n}\geq0,\forall m\in\mathcal{M}.\nonumber
\end{eqnarray}
The optimization formulation in (\ref{eq:pppp}) is motivated by that the distributed learning mechanism can converge to an equilibrium that drives a good balance between the channel exploration and channel exploitation. Thus, in the optimization formulation in (\ref{eq:pppp}) the term $\sum_{m=1}^{M}\sigma_{m}^{n}R_{m,I_{m}^{n}}^{n}(\boldsymbol{V}^{*})$ measures the performance of channel exploitation using the spectrum access strategy $\boldsymbol{\sigma}_{n}$ and the entropy term  $-\sum_{m=1}^{M}\sigma_{m}^{n}\ln\sigma_{m}^{n}$ measures the randomness of the spectrum access strategy $\boldsymbol{\sigma}$, which indicates the degree of channel exploration. Thus, the physical meaning of the optimization formulation in (\ref{eq:pppp}) is that we would like to find the optimal spectrum access strategy that strikes the best trade-off between the channel exploration and channel exploitation.  Since the problem is a convex optimization problem, by the KKT condition,
we can derive the optimal solution as
\[
\widetilde{\sigma}_{m}^{n}=\frac{\exp\left(\beta R_{m,I_{m}^{n}}^{n}(\boldsymbol{V}^{*})\right)}{\sum_{m'\in\mathcal{M}}\exp\left(\beta R_{m',I_{m'}^{n}}^{n}(\boldsymbol{V}^{*})\right)},\forall m\in\mathcal{M}.
\]
Since $R_{m,i}^{n}(\boldsymbol{V}^{*})=V_{m,i}^{n*}$ at the equilibrium
$\boldsymbol{V}^{*}$, we hence have that $\widetilde{\sigma}_{m}^{n}=\sigma_{m}^{n*},$
i.e., the spectrum access strategy $(\sigma_{m}^{n*},\forall m\in\mathcal{M})$
at the equilibrium $\boldsymbol{V}^{*}$ is the optimal solution to
the problem in (\ref{eq:pppp}). It follows that
\begin{eqnarray}
 &  & \sum_{m=1}^{M}\sigma_{m}^{n*}R_{m,I_{m}^{n}}^{n}(\boldsymbol{V}^{*})\nonumber \\
 & = & \max_{\boldsymbol{\sigma}_{n}}\left(\sum_{m=1}^{M}\sigma_{m}^{n}R_{m,I_{m}^{n}}^{n}(\boldsymbol{V}^{*})-\frac{1}{\beta}\sum_{m=1}^{M}\sigma_{m}^{n}\ln\sigma_{m}^{n}\right)\nonumber \\
 & & +\frac{1}{\beta}\sum_{m=1}^{M}\sigma_{m}^{n*}\ln\sigma_{m}^{n*}.\label{eq:pr5}
\end{eqnarray}
Furthermore, it is easy to check that
\begin{eqnarray}
 &  & \max_{\boldsymbol{\sigma}_{n}}\left(\sum_{m=1}^{M}\sigma_{m}^{n}R_{m,I_{m}^{n}}^{n}(\boldsymbol{V}^{*})-\frac{1}{\beta}\sum_{m=1}^{M}\sigma_{m}^{n}\ln\sigma_{m}^{n}\right)\nonumber \\
 & \geq & \max_{\boldsymbol{\sigma}_{n}}\sum_{m=1}^{M}\sigma_{m}^{n}R_{m,I_{m}^{n}}^{n}(\boldsymbol{V}^{*}).\label{eq:pr6}
\end{eqnarray}
Then from (\ref{eq:pr5}), (\ref{eq:pr6}), and due to the fact that  uniform distribution yields the largest entropy, i.e.,
\[-\sum_{m=1}^{M}\sigma_{m}^{n*}\ln\sigma_{m}^{n*}\leq\ln M,\] we know that
\begin{eqnarray*}
 &  & \sum_{m=1}^{M}\sigma_{m}^{n*}R_{m,I_{m}^{n}}^{n}(\boldsymbol{V}^{*})\\
 & \geq & \max_{\boldsymbol{\sigma}_{n}}\sum_{m=1}^{M}\sigma_{m}^{n}R_{m,I_{m}^{n}}^{n}(\boldsymbol{V}^{*})+\frac{1}{\beta}\sum_{m=1}^{M}\sigma_{m}^{n*}\ln\sigma_{m}^{n*}\\
 & \geq & \max_{\boldsymbol{\sigma}_{n}}\sum_{m=1}^{M}\sigma_{m}^{n}R_{m,I_{m}^{n}}^{n}(\boldsymbol{V}^{*})-\frac{1}{\beta}\ln M.
\end{eqnarray*}
\end{proof}

Theorems \ref{thm:For-the-distributed} and \ref{thm:For-the-distributed2}
together illustrate the trade-off between the channel exploration and exploitation
through the choice of $\beta$. As we have shown in Theorem \ref{thm:For-the-distributed}, a moderately small $\beta$ is required
to ensure the necessary degree of channel sampling for network environment exploration so that user's perception values can be constantly updated, and guarantees the convergence of distributed learning mechanism to an equilibrium. Furthermore, by deriving the approximation gap in Theorem \ref{thm:For-the-distributed2}, we show that when a moderate $\beta$ is used, the equilibrium by the distributed learning mechanism is an efficient approximate solution with a small performance loss.



\section{Further Discussions}\label{Discusssion}
We have studied social recommendation aided DSA with both strong and weak network information in Sections \ref{sec:Social-Recommendation-Based} and \ref{sec:Social-Recommendation-Based-1}, respectively. For the strong network information case, to meet the requirement that a secondary user has strong knowledge
of the network environment, a supporting infrastructure (e.g., secondary access point) to collect and analyze the statistical network information is typically needed, which would demand significant implementation overhead and high deployment cost in practice. This motivates us to explore the weak network information case and devise a more practical social recommendation aided DSA mechanism, such that each secondary user adapts its spectrum access decision based on local observations. Another major motivation of considering the strong network information case is that it can serve as the benchmark for the weak information case. Performance evaluation in Section \ref{Sim} demonstrates  that the social recommendation aided DSA mechanism with weak network information can achieve superior performance, with a small performance loss (at most $12\%$) with respect to the strong network information case.

An innovative feature of the proposed mechanisms in this paper is that we explore the social dimension of DSA and leverage the endogenous social incentive which comes from the intrinsic social ties among secondary users. The underlying
rationale is that hand-held devices are carried by human beings and the social trust stemming from human social ties (e.g., kinship, friendship, or colleague relationship) can be utilized to stimulate effective and trustworthy collaboration for channel recommendation. Building upon the social trust among secondary users, we can prevent the potential attacks of releasing false channel recommendations by untrusted/unacquainted users and enhance the security level of DSA. Moreover, by participating in collaborative channel recommendation with social neighbors,  each individual user can receive exogenous incentive by benefiting from possessing a better view of the entire network environment and achieving more informed spectrum access decision making. Numerical results in Section \ref{Sim} corroborate that the performance of social recommendation aided DSA improves as the social link density among the secondary users increases (due to the increase of cooperative social interactions).

As discussed in Section \ref{socialgraph}, to preserve the privacy among secondary users, the social relationship identification
procedure can be carried out prior to the spectrum access. Specifically, two secondary users in proximity can locally initiate the ``matching" process to detect the common social features between them. For example, two users can match their contact lists. If they have the phone numbers of each other or many of their phone numbers are the same, then it is very likely that they know each other. As another example, two device users can match their home and working addresses and identify whether they are neighbors or colleagues. To preserve the privacy of the secondary users, the private matching protocol in \cite{von2008veneta} can be adopted to design a privacy-preserving social relationship identification mechanism.  Moreover, to thwart the tampering and falsification attacks on the channel recommendation messages by malicious users, we can adopt the techniques of attribute-based encryption and signatures in \cite{bethencourt2007ciphertext} and \cite{maji2011attribute} such that two secondary users with social trust (e.g., family members)  can encrypt/decrypt and authenticate the messages based on their common and shared social features (e.g., home address and birthday). The focus of this paper is to devise efficient social recommendation aided DSA mechanisms by considering different network information settings. Due to space limitation, we will pursuit a thorough understanding of the security and privacy issues of the social recommendation aided DSA in future work.

\begin{figure*}[tt]
\begin{minipage}[t]{0.32\linewidth}
\centering
\includegraphics[scale=0.38]{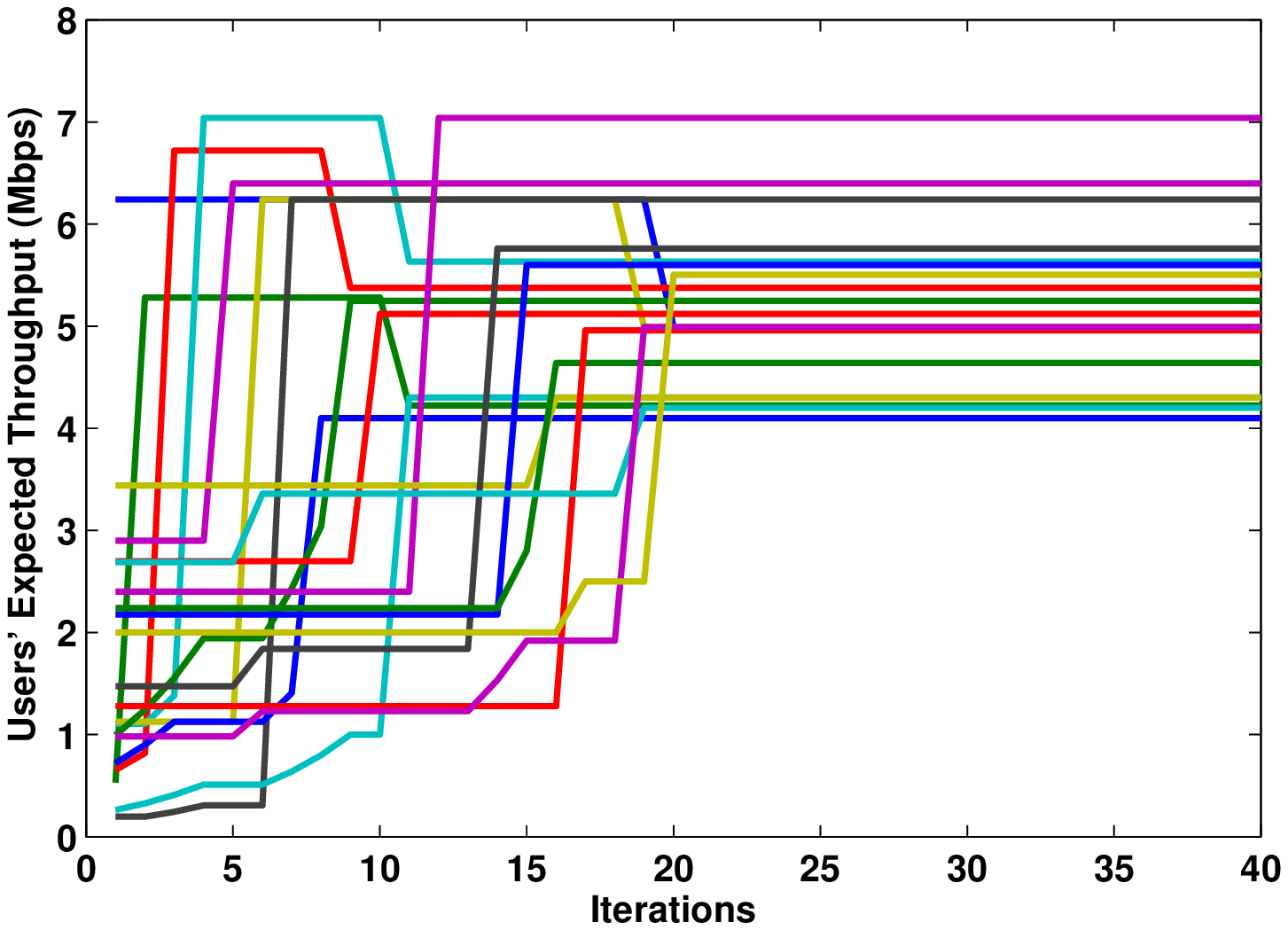}
\caption{\label{fig:Dynamic-of-users'}Dynamics of users' expected throughput
by Nash equilibrium computation algorithm}
\end{minipage}
\hfill
\begin{minipage}[t]{0.32\linewidth}
\centering
\includegraphics[scale=0.38]{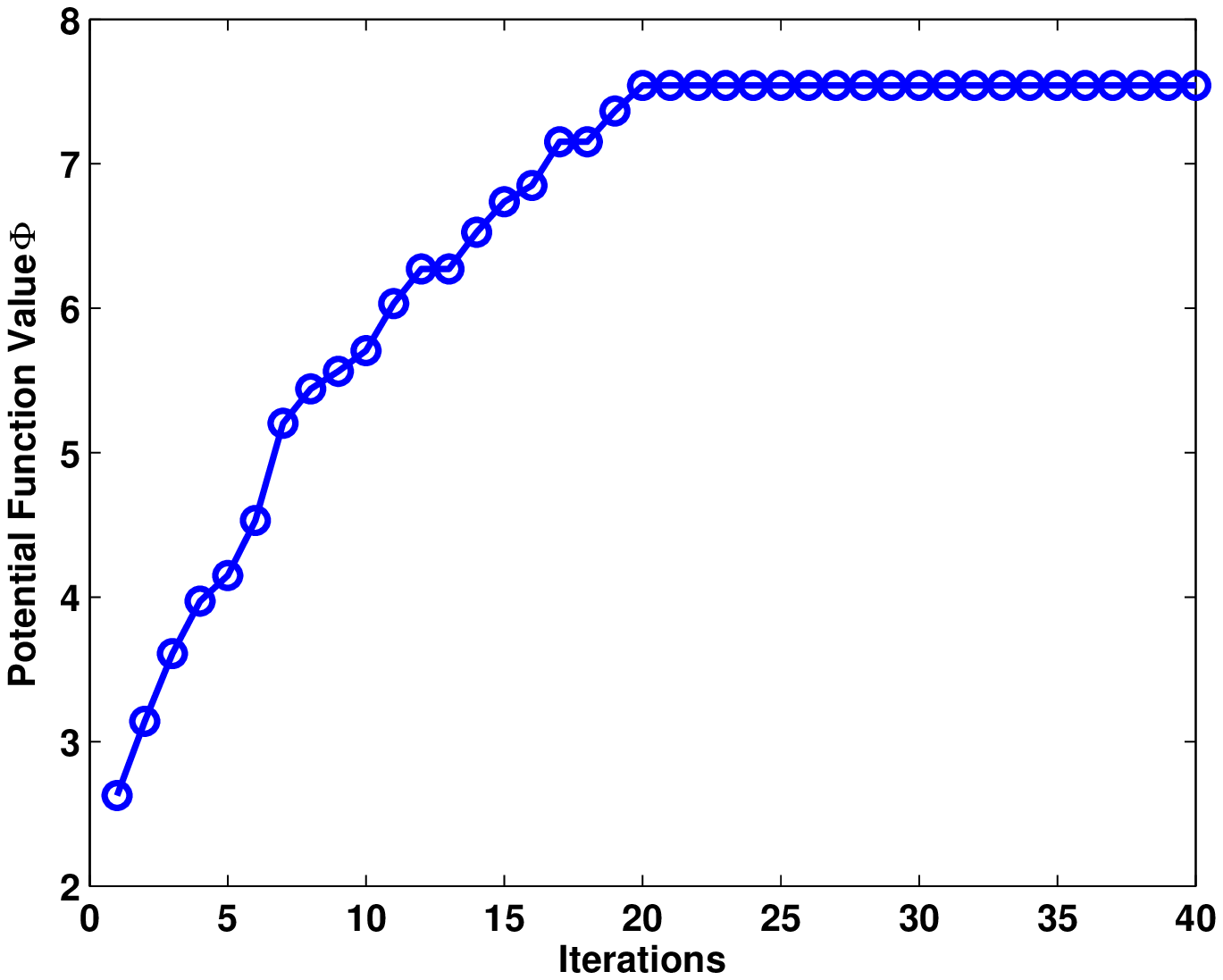}
\caption{\label{fig:Dynamics-of-potential}Dynamics of potential function value
by Nash equilibrium computation algorithm}
\end{minipage}
\hfill
\begin{minipage}[t]{0.32\linewidth}
\centering
\includegraphics[scale=0.38]{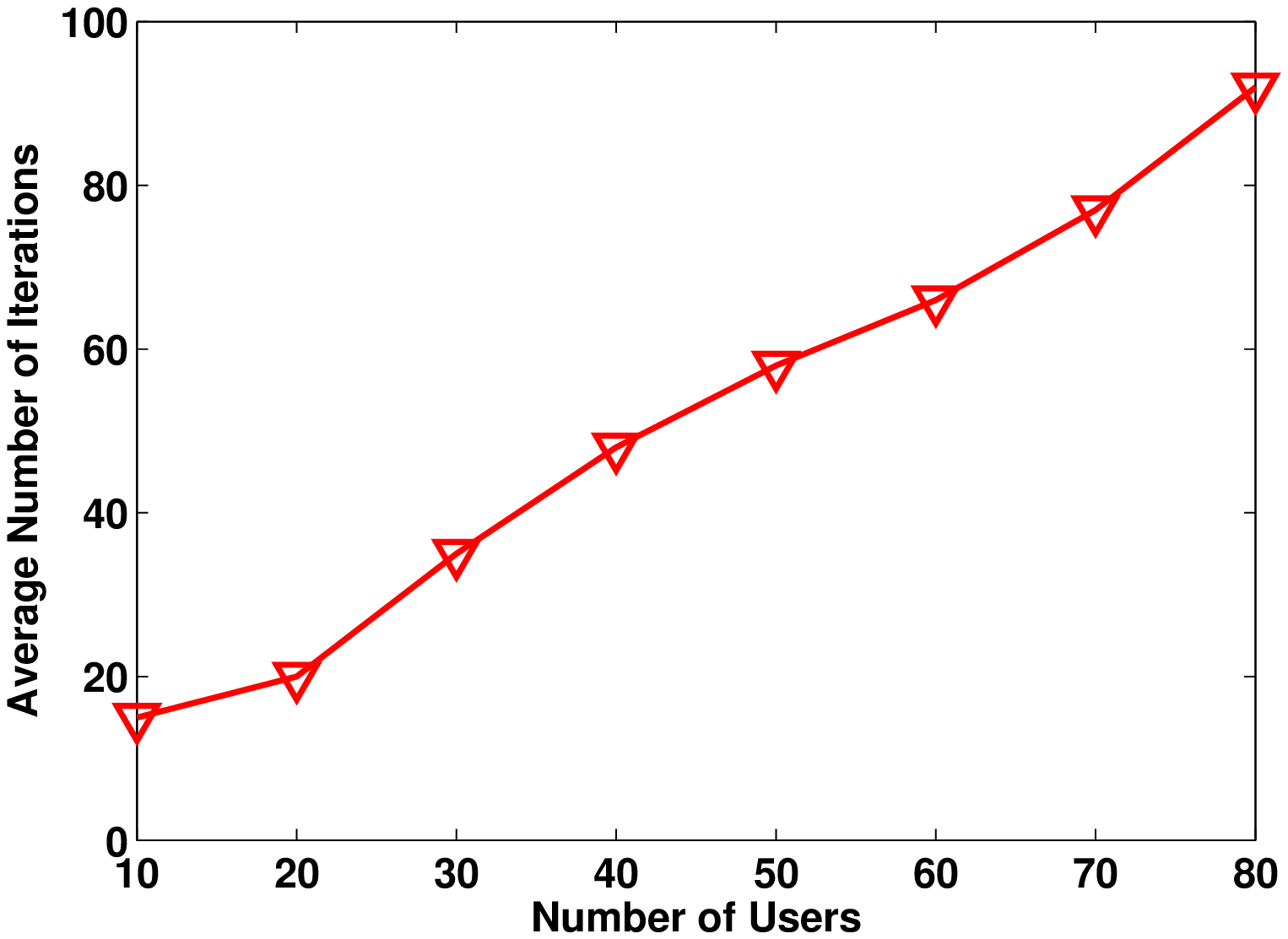}
\caption{\label{fig:Iterations}Average number of iterations for convergence by Nash equilibrium computation algorithm}
\end{minipage}
\end{figure*}

\begin{figure*}[tt]
\begin{minipage}[t]{0.32\linewidth}
\centering
\includegraphics[scale=0.38]{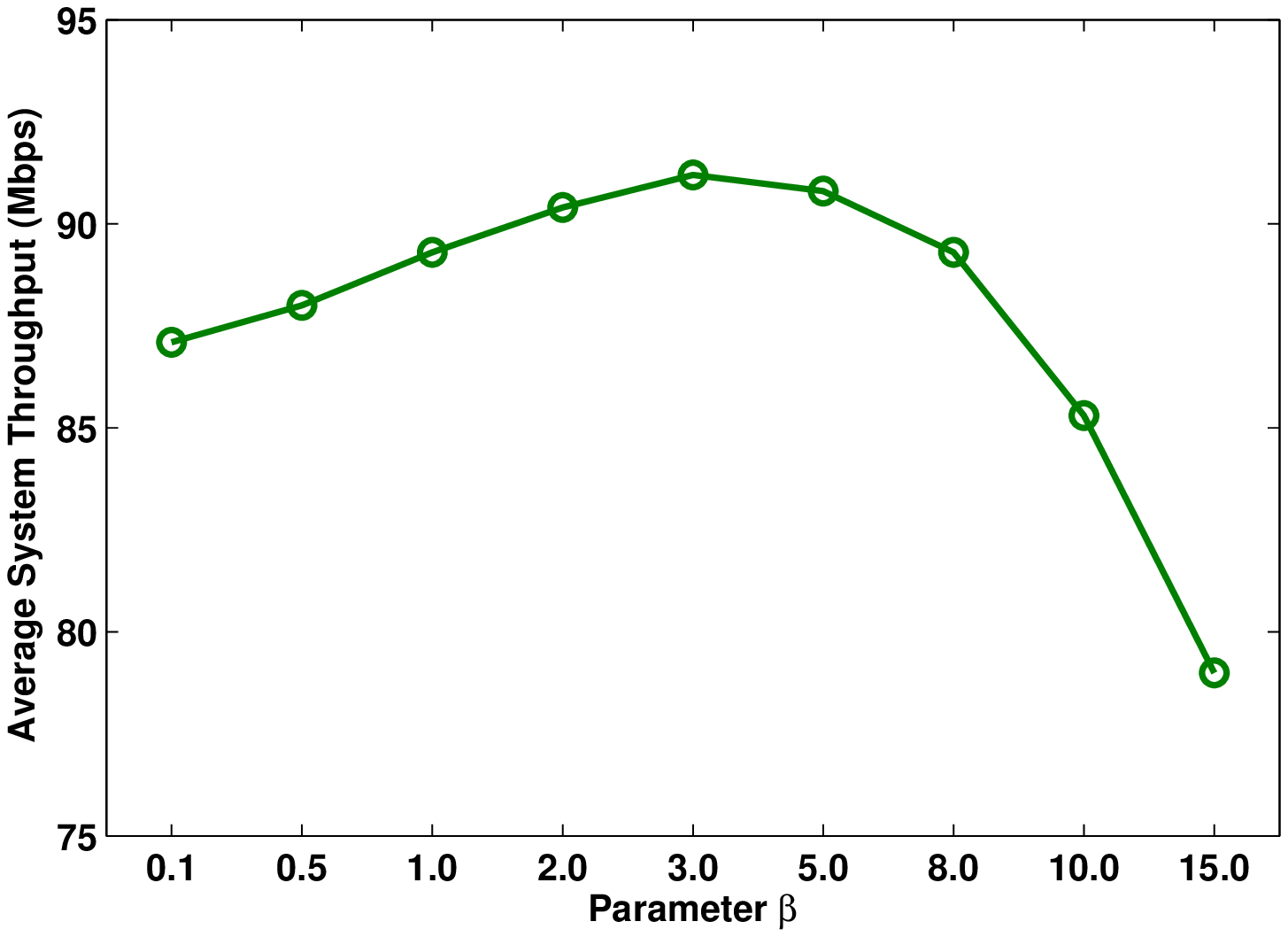}
\caption{\label{fig:System-performance-of}System performance of the distributed
reinforcement learning with different parameters $\beta$}
\end{minipage}
\hfill
\begin{minipage}[t]{0.32\linewidth}
\centering
\includegraphics[scale=0.38]{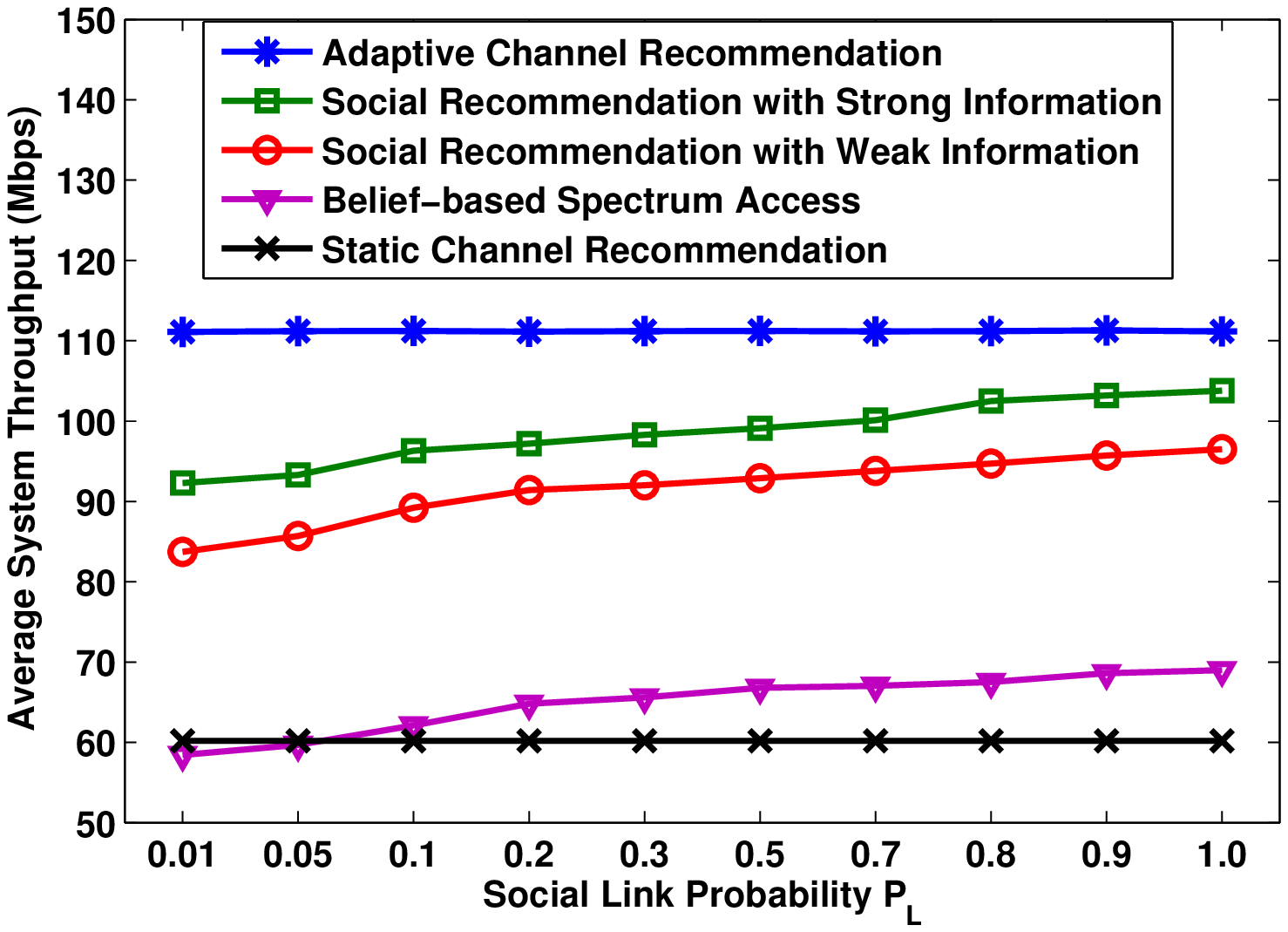}
\caption{\label{fig:System-throughput-with}System throughput with $N=20$
secondary users and different social link probabilities}
\end{minipage}
\hfill
\begin{minipage}[t]{0.32\linewidth}
\centering
\includegraphics[scale=0.38]{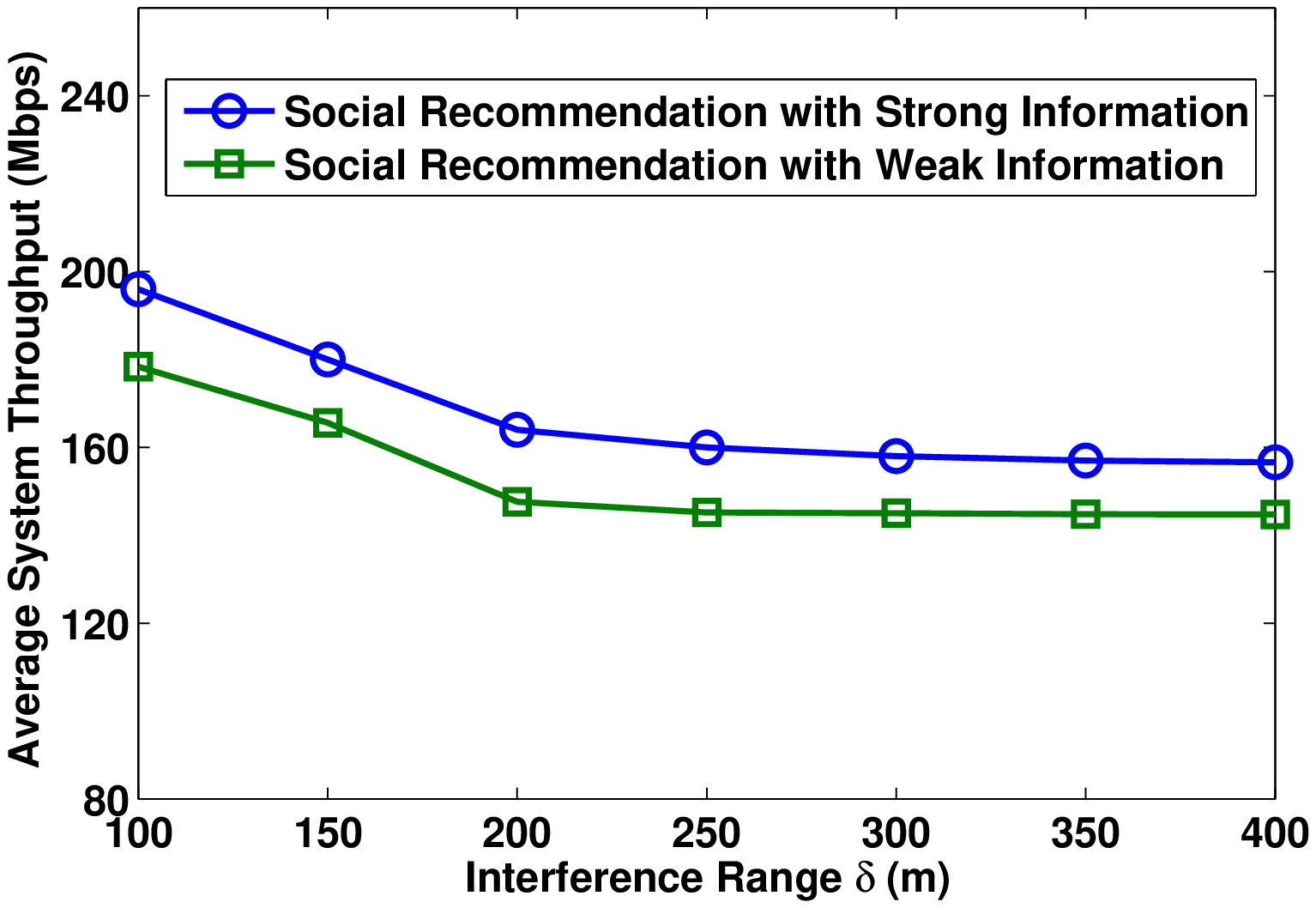}
\caption{\label{fig:Interference}System throughput with $N=60$ secondary users and different interference ranges}
\end{minipage}
\end{figure*}

\section{Numerical Results}\label{Sim}

In this section, we evaluate the performance of the proposed social
recommendation aided DSA schemes with strong and weak
network information by numerical studies.

We first consider a dynamic spectrum access network that consists of
$N=20$ secondary users and $M=5$ primary channels. For each channel
$m$, we set the channel state transition probabilities $\lambda_{m}=\mu_{m}=0.2$.
We consider a Rayleigh fading channel environment such that the channel
gain of a secondary user $n$ on an idle channel $m$ in each time
slot $t$ follows the exponential distribution with the mean channel
data rate $B_{m}^{n}$ being randomly assigned from the set $\{10,20,..,50\}$
Mbps. For the idle channel contention, the contention probability
$p_{n}$ of a secondary user will be randomly assigned from the set
$\{0.1,0.2,0.3\}$. We consider that the secondary users are randomly scattered across a square area with a side length of $500$
m and construct the interference graph $\mathcal{G}^{p}$ by setting
the interference range $\delta=100$ m, i.e., there is an interference
edge between two secondary users if their distance is less than $100$
m. For the social graph $\mathcal{G}^{s}$ for channel recommendation,
we will consider two types of social graphs: Erdos-Renyi social graph
and the real data trace based social graph.

\subsection{Erdos-Renyi Social Graph}

We first consider the case that the social graph $\mathcal{G}^{s}$
is represented by Erdos-Renyi (ER) graph model \cite{newman2002random} where a
social link exists between any two secondary users with a probability
of $P_{L}$.

We first show the dynamics of the Nash equilibrium computation algorithm
in Algorithm \ref{alg:Online-Distributed-Channel} for the social
recommendation aided DSA with strong network
information in Figure \ref{fig:Dynamic-of-users'}. We see that the
algorithm can converge to an equilibrium in less than $30$ steps
of iterations. To verify that the equilibrium is a Nash equilibrium,
we show the dynamics of the potential function value $\Phi(\boldsymbol{a})$
of the social recommendation aided DSA game in Figure
\ref{fig:Dynamics-of-potential}. We see that the algorithm can drive
the potential function to a maximal point, which is a Nash equilibrium
according to the property of potential game. We also show the average number of iterations for convergence by Algorithm \ref{alg:Online-Distributed-Channel} in a time slot in Figure \ref{fig:Iterations} with different number of users $N=10,20,...,80$. We see that the algorithm increases linearly as the number of users increases. This demonstrates the algorithm can scale well as the system size increases. This implies that Algorithm \ref{alg:Online-Distributed-Channel} has a very low complexity in practice. For example, in practical network systems each time slot has many scheduling mini-slots, and each of which has a length of several microseconds (e.g., 9 microseconds for 802.11a) \cite{jiang2010scheduling}. In this case, if in the social recommendation system we allow a time budget of 1 millisecond for computing the Nash equilibrium during a time slot, then a secondary base-station can accommodate over 100 associated users.

Note that we consider a slotted time structure and assume that the channel state is fixed during a time slot (but may change across different time slots). As long as such an assumption holds, during each time slot we can still compute the Nash equilibrium given users' received recommendation states using Algorithm 1. If the channel availability changes at a fast time scale, then a shorter time slot should be defined in the system to ensure the assumption above is valid. In this case, the allowable time budget for computing the Nash equilibrium during a time slot may be shortened and hence the number of associated users that a secondary base-station can accommodate may decrease. Nevertheless, the activities of primary users typically change over the time scale of milliseconds \cite{liang2011cognitive},  which implies that we have sufficient time budget for computing the Nash equilibrium during a time slot in practice. For the distributed learning algorithm in the weak information case, each user updates its strategy locally based on its recommendation information during a time slot, and hence the time overhead for such an update can be ignored. Thus, the distributed learning algorithm can also well perform.

We then evaluate the distributed reinforcement learning algorithm
with different choices of parameter $\beta$ for social recommendation
aided DSA with weak network information. The
result in Figure \ref{fig:System-performance-of} shows the system
performance with different parameters $\beta$, and demonstrates that
a proper parameter can achieve a balance between exploration and exploitation
and offer the best performance. When $\beta$ is small, the secondary
users tend to select the channels more randomly (i.e., over-exploration)
and the performance gap can be large according to Theorem \ref{thm:For-the-distributed2}.
When $\beta$ is very large, the algorithm may not converge
due to over-exploitation and the performance is again negatively affected.
In the following simulations we set $\beta=3$ since it can achieve
a good system performance.

%
%

To evaluate the impact of social link density of the social graph,
we implement the simulations with different social link probabilities
$P_{L}=0.01,0.05,...,1.0$, respectively. For each given $P_{L}$,
we average over $100$ runs. As the benchmark, we also compare the
social recommendation aided DSA mechanisms with the following schemes:

(1) \emph{Static channel recommendation}: We implement the static channel
recommendation scheme in \cite{li2010customer}, such that all the secondary users
are assumed to be cooperative to recommend channels to each other
and a user will put more weight on selecting a recommended channel
according to a constant branching probability $P_{rec}$. Specifically,
suppose a user has $R$ recommended channels in a time slot, then
the user will choose a recommended channel with a probability of $\frac{P_{rec}}{R}$
and an un-recommended channel with a probability $\frac{P_{rec}}{M-R}$.
We obtain the optimal constant branching probability $P_{rec}$ by
exhaustive search.

(2) \emph{Adaptive channel recommendation}: We implement the adaptive channel
recommendation scheme in \cite{ACR2013}, such that all the secondary users
will select channels adaptively in order to maximize the average system
throughput. The optimal channel selection policies of all the secondary users
are computed by the global optimization approach -- Model Reference
Adaptive Search method \cite{hu2007model}. Note that the adaptive channel recommendation
scheme provides the performance upper-bound for the proposed social
recommendation mechanisms, since it maximizes the system-wide performance
by assuming that the network information is complete and all the secondary
users are cooperative (i.e., the interest of all users is aligned
and the underlying social graph for channel recommendation is complete).

(3) \emph{Belief-based spectrum access}: We implement the belief-based
spectrum access scheme in \cite{lai2011cognitive2}, such that each user shares the
sensing result of the chosen channel with its social neighbors at
the end of each time slot. Accordingly, a user $n$ updates two vectors
$\boldsymbol{X}_{n}=(X_{1}^{n},...,X_{M}^{n})$ and $\boldsymbol{Y}_{n}=(Y_{1}^{n},...,Y_{M}^{n})$
with $X_{m}^{n}$ and $Y_{m}^{n}$ denoting the number of time slots
in which channel $m$ is idle, and the number of time slots in which
the channel $m$ has been accessed, respectively. Then a user $n$
computes its belief as $\nu_{m}^{n}=\frac{X_{m}^{n}}{Y_{m}^{n}}$
and chooses a channel $m$ with a probability $\frac{\nu_{m}^{n}}{\sum_{i=1}^{M}\nu_{i}^{n}}.$

We show the average system throughput by all the schemes in Figure
\ref{fig:System-throughput-with}. We see that the performance of
social recommendation schemes with strong and weak network information
increases as the social link probability $P_{L}$ increases. This is due to the fact that as the increase of cooperative social interactions, each individual secondary user benefits more from the channel recommendation by having a better view of the network environment and achieving more informed spectrum access decision making. Compared
with the adaptive channel recommendation scheme, the performance losses
of social recommendation schemes with strong and weak network information
are at most $16\%$ and $25\%$, respectively. This demonstrates that the proposed schemes can still achieve a good performance even when the number of social friends for channel recommendations is small. When the social link
probability $P_{L}$ is large, the performance losses can be further
reduced to $7\%$ and $14\%$, respectively. We also observe that
the social recommendation schemes with strong and weak network information
can achieve at least $48\%$ and $39\%$ performance gain over the
belief-based spectrum access scheme in all cases. Compared with the
static channel recommendation scheme, the performance gains of social
recommendation schemes with strong and weak network information are
up-to $73\%$ and $60\%$, respectively. Furthermore, compared with the case of strong network information, the performance loss of social recommendation aided DSA with weak network information is at most $12\%$

We next investigate the impact of interference range $\delta$ on the system performance. We fix the number of users $N=60$ and the social link probability $P_{L}=0.2$ and set different interference ranges $\delta=100, 150, ..., 400$m, respectively. We show the average system performance by both social recommendation aided DSA mechanisms in Figure \ref{fig:Interference}. We observe that the system performance decreases as the interference range $\delta$ increases. This is because the increase of  the interference range leads to severe interferences among users for spectrum access. When interference range is large enough (e.g., $\delta \geq 300$m), the physical graph is very close to the fully meshed case, and hence the system degradation levels off.


\begin{figure}[tt]
\centering
\includegraphics[scale=0.4]{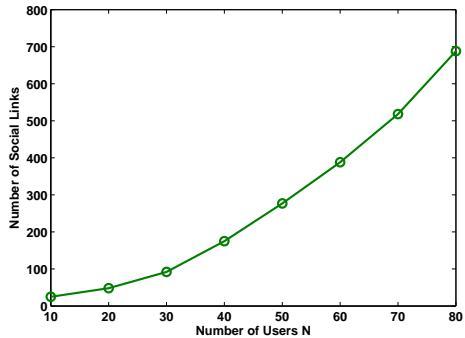}
\caption{\label{fig:Number-of-social}Number of social links of the social
graphs based on real trace Brightkite}
\end{figure}

\begin{figure}[tt]
\centering
\includegraphics[scale=0.4]{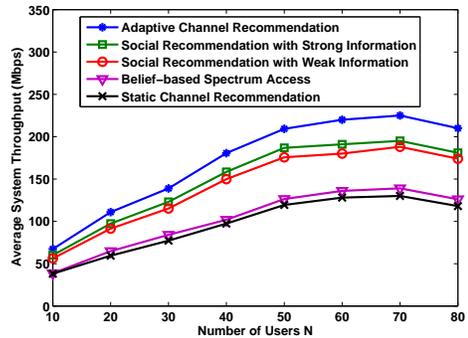}
\caption{\label{fig:System-throughput-with-1}System throughput with different
number of secondary users}
\end{figure}

\subsection{Real Trace Based Social Graph}

We then evaluate the proposed social recommendation schemes with the social graphs for channel recommendation
generated according to the friendship network of the real data trace
Brightkite \cite{KONECT2013}. Brightkite contains
an explicit friendship network among the users. Different from
the Erdos-Renyi (ER) social graph, the friendship network of
Brightkite is scale-free such that the node degree distribution
follows a power law \cite{KONECT2013}. We implement simulations with the number of secondary
users $N=10,20,...,80$, respectively. The total number of social
links among these users of the social graphs is shown in Figure \ref{fig:Number-of-social}. For a given social graph with a fixed user size, we randomly assign the node IDs in the social graph to the secondary users in the social recommendation DSA system and repeat over $100$ times to compute the average performance.

We show the average system throughput in Figure \ref{fig:System-throughput-with-1}.
We see that the performance losses of social recommendation schemes
with strong and weak network information are at most $12\%$ and $17\%$,
respectively, compared with the adaptive channel recommendation scheme.
Moreover, the social recommendation schemes with strong and weak network
information can achieve up-to $55\%$ and $45\%$ performance gain
over the belief-based spectrum access scheme, respectively. Compared
with the static channel recommendation scheme, the performance gains
of social recommendation schemes with strong and weak network information
are up-to $63\%$ and $54\%$, respectively. The performance loss of social recommendation aided DSA with weak network information is at most $8\%$ with respect to the strong network information case. This demonstrates the
efficiency of the proposed social recommendation aided DSA mechanisms when the real social network is used in practices.

\rev{We also observe from Figure \ref{fig:System-throughput-with-1} that when the network size is small (e.g., $N<50$), the system-wide performance grows almost linearly. This is mainly due to that the performance gain from the spectrum spatial reuse (i.e., more spectrum opportunities are utilized at more locations) is more significant than the increase of the cost by information exchange and interference. When the the network size becomes larger (e.g., $N>50$), the system-wide performance grows sub-linearly and then max out beyond a certain network size (e.g., $N=70$). This is because that the cost by information exchange and severe interference among users becomes very heavy and dominates the gain by the spectrum spatial reuse when the network size is large.}


\section{Conclusion}\label{Con}
In this paper, we study the social recommendation aided DSA mechanism design in both strong and weak network information cases. For the strong network information case, we model the social recommendation aided DSA problem at each time slot as a strategic game and exploit the structure property of the game to devise the Nash equilibrium computation algorithm. For the weak network information case, we develop a distributed reinforcement learning mechanism using only the local observations of secondary users. We also derive the convergence conditions and characterize the equilibrium of the learning mechanism. The efficacy of the proposed mechanisms is further demonstrated through extensive numerical evaluations using real social data trace.

For future work, we will consider related security and privacy issues. We are utilizing the techniques such as private matching and attribute-based encryption and signatures to design secure and privacy-preserving social recommendation mechanisms for DSA.

\bibliographystyle{ieeetran}
\bibliography{DynamicSpectrum,SocialSpectrum}

\begin{thebibliography}{10}
\providecommand{\url}[1]{#1}
\csname url@samestyle\endcsname
\providecommand{\newblock}{\relax}
\providecommand{\bibinfo}[2]{#2}
\providecommand{\BIBentrySTDinterwordspacing}{\spaceskip=0pt\relax}
\providecommand{\BIBentryALTinterwordstretchfactor}{4}
\providecommand{\BIBentryALTinterwordspacing}{\spaceskip=\fontdimen2\font plus
\BIBentryALTinterwordstretchfactor\fontdimen3\font minus
  \fontdimen4\font\relax}
\providecommand{\BIBforeignlanguage}[2]{{%
\expandafter\ifx\csname l@#1\endcsname\relax
\typeout{** WARNING: IEEEtran.bst: No hyphenation pattern has been}%
\typeout{** loaded for the language `#1'. Using the pattern for}%
\typeout{** the default language instead.}%
\else
\language=\csname l@#1\endcsname
\fi
#2}}
\providecommand{\BIBdecl}{\relax}
\BIBdecl

\bibitem{wellens2009empirical}
M.~Wellens, J.~Riihij{\"a}Rvi, and P.~M{\"a}H{\"o}Nen, ``Empirical time and
  frequency domain models of spectrum use,'' \emph{Physical Communication},
  vol.~2, no.~1, pp. 10--32, 2009.

\bibitem{wellens2009spatial}
M.~Wellens, J.~Riihijarvi, M.~Gordziel, and P.~Mahonen, ``Spatial statistics of
  spectrum usage: From measurements to spectrum models,'' in \emph{IEEE ICC},
  2009.

\bibitem{bernstein2002complexity}
D.~S. Bernstein, R.~Givan, N.~Immerman, and S.~Zilberstein, ``The complexity of
  decentralized control of markov decision processes,'' \emph{Mathematics of
  operations research}, vol.~27, no.~4, pp. 819--840, 2002.

\bibitem{goode2010pursuit}
B.~Goode, A.~Kurdila, and M.~Roan, ``Pursuit-evasion with acoustic sensing
  using one step nash equilibria,'' in \emph{American Control Conference
  (ACC)}, 2010.

\bibitem{kayastha2011applications}
N.~Kayastha, D.~Niyato, P.~Wang, and E.~Hossain, ``Applications, architectures,
  and protocol design issues for mobile social networks: A survey,''
  \emph{Proceedings of the IEEE}, vol.~99, no.~12, pp. 2130--2158, 2011.

\bibitem{gao2009multicasting}
W.~Gao, Q.~Li, B.~Zhao, and G.~Cao, ``Multicasting in delay tolerant networks:
  a social network perspective,'' in \emph{ACM Mobihoc}, 2009.

\bibitem{hui2011bubble}
P.~Hui, J.~Crowcroft, and E.~Yoneki, ``Bubble rap: Social-based forwarding in
  delay-tolerant networks,'' \emph{IEEE TMC}, vol.~10, no.~11, pp. 1576--1589,
  2011.

\bibitem{costa2008socially}
P.~Costa, C.~Mascolo, M.~Musolesi, and G.~P. Picco, ``Socially-aware routing
  for publish-subscribe in delay-tolerant mobile ad hoc networks,'' \emph{IEEE
  Journal on Selected Areas in Communications}, vol.~26, no.~5, pp. 748--760,
  2008.

\bibitem{han2012mobile}
B.~Han, P.~Hui, V.~Kumar, M.~Marathe, J.~Shao, and A.~Srinivasan, ``Mobile data
  offloading through opportunistic communications and social participation,''
  \emph{IEEE TMC}, vol.~11, no.~5, pp. 821--834, 2012.

\bibitem{chen2013social}
X.~Chen, B.~Proulx, X.~Gong, and J.~Zhang, ``Social trust and social
  reciprocity based cooperative {D2D} communications,'' in \emph{ACM Mobihoc},
  2013.

\bibitem{li2010customer}
H.~Li, ``Customer reviews in spectrum: recommendation system in cognitive radio
  networks,'' in \emph{IEEE DySpan}, 2010.

\bibitem{ACR2013}
X.~Chen, J.~Huang, and H.~Li, ``Adaptive channel recommendation for
  opportunistic spectrum access,'' \emph{IEEE Trans Mobile Computing}, vol.~12,
  no.~9, pp. 1788 --1800, 2013.

\bibitem{zhao2007decentralized}
Q.~Zhao, L.~Tong, A.~Swami, and Y.~Chen, ``Decentralized cognitive mac for
  opportunistic spectrum access in ad hoc networks: A pomdp framework,''
  \emph{IEEE Journal on Selected Areas in Communications}, vol.~25, no.~3, pp.
  589--600, 2007.

\bibitem{chen2010prediction}
Z.~Chen and R.~C. Qiu, ``Prediction of channel state for cognitive radio using
  higher-order hidden markov model,'' in \emph{IEEE SoutheastCon}.\hskip 1em
  plus 0.5em minus 0.4em\relax IEEE, 2010, pp. 276--282.

\bibitem{cormio2010common}
C.~Cormio and K.~R. Chowdhury, ``Common control channel design for cognitive
  radio wireless ad hoc networks using adaptive frequency hopping,'' \emph{Ad
  Hoc Networks}, vol.~8, no.~4, pp. 430--438, 2010.

\bibitem{von2008veneta}
M.~Von~Arb, M.~Bader, M.~Kuhn, and R.~Wattenhofer, ``Veneta: Serverless
  friend-of-friend detection in mobile social networking,'' in \emph{IEEE
  WIMOB}, 2008.

\bibitem{huang2011decentralized}
S.~Huang, X.~Liu, and Z.~Ding, ``Decentralized cognitive radio control based on
  inference from primary link control information,'' \emph{IEEE Journal on
  Selected Areas in Communications}, vol.~29, no.~2, pp. 394--406, 2011.

\bibitem{zhang2013data}
Y.~Zhang, S.~He, and J.~Chen, ``Data gathering optimization by dynamic sensing
  and routing in rechargeable sensor networks,'' \emph{IEEE/ACM Transactions on
  Networking}, 2015.

\bibitem{chang2007optimal}
N.~B. Chang and M.~Liu, ``Optimal channel probing and transmission scheduling
  for opportunistic spectrum access,'' in \emph{Proceedings of the 13th annual
  ACM international conference on Mobile computing and networking}.\hskip 1em
  plus 0.5em minus 0.4em\relax ACM, 2007, pp. 27--38.

\bibitem{maskery2009decentralized}
M.~Maskery, V.~Krishnamurthy, and Q.~Zhao, ``Decentralized dynamic spectrum
  access for cognitive radios: cooperative design of a non-cooperative game,''
  \emph{IEEE Transactions on Communications}, vol.~57, no.~2, pp. 459--469,
  2009.

\bibitem{ahmad2009optimality}
S.~Ahmad, M.~Liu, T.~Javidi, Q.~Zhao, and B.~Krishnamachari, ``Optimality of
  myopic sensing in multichannel opportunistic access,'' \emph{IEEE
  Transactions on Information Theory}, vol.~55, no.~9, pp. 4040--4050, 2009.

\bibitem{hossain2009dynamic}
E.~Hossain, D.~Niyato, and Z.~Han, \emph{Dynamic spectrum access and management
  in cognitive radio networks}.\hskip 1em plus 0.5em minus 0.4em\relax
  Cambridge university press, 2009.

\bibitem{xue2014cooperative}
D.~Xue, E.~Ekici, and M.~Vuran, ``Cooperative spectrum sensing in cognitive
  radio networks using multidimensional correlations,'' \emph{IEEE Transactions
  on Wireless Communications}, vol.~13, no.~4, pp. 1832--1843, 2014.

\bibitem{PG1996}
D.~Monderer and L.~S. Shapley, ``Potential games,'' \emph{Games and Economic
  Behavior}, vol.~14, pp. 124--143, 1996.

\bibitem{jiang2010scheduling}
L.~Jiang and J.~Walrand, ``Scheduling and congestion control for wireless and
  processing networks,'' \emph{Synthesis Lectures on Communication Networks},
  vol.~3, no.~1, pp. 1--156, 2010.

\bibitem{hossain2007cognitive}
E.~Hossain and V.~K. Bhargava, \emph{Cognitive wireless communication
  networks}.\hskip 1em plus 0.5em minus 0.4em\relax Springer Science, 2007.

\bibitem{bertsekas1995neuro}
D.~P. Bertsekas and J.~N. Tsitsiklis, ``Neuro-dynamic programming: An
  overview,'' in \emph{IEEE CDC}, 1995.

\bibitem{cominetti2010payoff}
R.~Cominetti, E.~Melo, and S.~Sorin, ``A payoff-based learning procedure and
  its application to traffic games,'' \emph{Games and Economic Behavior},
  vol.~70, no.~1, pp. 71--83, 2010.

\bibitem{tekin2011online}
C.~Tekin and M.~Liu, ``Online learning in opportunistic spectrum access: A
  restless bandit approach,'' in \emph{IEEE INFOCOM}.\hskip 1em plus 0.5em
  minus 0.4em\relax IEEE, 2011, pp. 2462--2470.

\bibitem{tekin2012online}
------, ``Online learning of rested and restless bandits,'' \emph{IEEE
  Transactions on Information Theory}, vol.~58, no.~8, pp. 5588--5611, 2012.

\bibitem{granas2003fixed}
A.~Granas, \emph{Fixed point theory}.\hskip 1em plus 0.5em minus 0.4em\relax
  Springer, 2003.

\bibitem{bethencourt2007ciphertext}
J.~Bethencourt, A.~Sahai, and B.~Waters, ``Ciphertext-policy attribute-based
  encryption,'' in \emph{IEEE Symposium on Security and Privacy}.\hskip 1em
  plus 0.5em minus 0.4em\relax IEEE, 2007, pp. 321--334.

\bibitem{maji2011attribute}
H.~K. Maji, M.~Prabhakaran, and M.~Rosulek, ``Attribute-based signatures,'' in
  \emph{Topics in Cryptology--CT-RSA 2011}.\hskip 1em plus 0.5em minus
  0.4em\relax Springer, 2011, pp. 376--392.

\bibitem{newman2002random}
M.~E. Newman, D.~J. Watts, and S.~H. Strogatz, ``Random graph models of social
  networks,'' \emph{PNAS}, vol.~99, no.~1, pp. 2566--2572, 2002.

\bibitem{liang2011cognitive}
Y.-C. Liang, K.-C. Chen, G.~Y. Li, and P.~Mahonen, ``Cognitive radio networking
  and communications: An overview,'' \emph{IEEE Transactions on Vehicular
  Technology,}, vol.~60, no.~7, pp. 3386--3407, 2011.

\bibitem{hu2007model}
J.~Hu, M.~C. Fu, and S.~I. Marcus, ``A model reference adaptive search method
  for global optimization,'' \emph{Operations Research}, vol.~55, no.~3, pp.
  549--568, 2007.

\bibitem{lai2011cognitive2}
L.~Lai, H.~El~Gamal, H.~Jiang, and H.~V. Poor, ``Cognitive medium access:
  Exploration, exploitation, and competition,'' \emph{IEEE Transactions on
  Mobile Computing}, vol.~10, no.~2, pp. 239--253, 2011.

\bibitem{KONECT2013}
\BIBentryALTinterwordspacing
J.~Kunegis, ``Network analysis of brightkite,'' in \emph{KONECT - The Koblenz
  Network Collection}, 2013. [Online]. Available:
  \url{http://konect.uni-koblenz.de/networks/loc-brightkite_edges}
\BIBentrySTDinterwordspacing

\end{thebibliography}

\end{document}